\newtheorem{theorem}{Theorem}
\newtheorem{lemma}{Lemma}
\newtheorem{example}{Example}
\newtheorem{remark}{Remark}
\theoremstyle{plain}
\newcommand{\alphabet}{\Sigma}
\newcommand{\A}{\mathcal A}
\newcommand{\B}{\mathcal B}
\newcommand{\M}{\mathcal M}
\newcommand\G{\mathcal G}
\newcommand{\N}{\mathbbm N}
\newcommand{\Z}{\mathbbm Z}
\newcommand{\Q}{\mathbbm Q}
\newcommand{\R}{\mathbbm R}
\newcommand{\Ns}{\mathbbm N\setminus\{0\}}
\newcommand{\Zm}{\mathbbm{Z}_{\max}}
\newcommand{\Nm}{\mathbbm{N}_{\max}}
\newcommand{\matrice}[3]{\M_{#1,#2}(#3)}
\newcommand{\matrices}[2]{\M_{#1}(#2)}
\newcommand{\dfn}[1]{\emph{#1}}
\newcommand{\resp}[1]{\ (resp. #1)}
\newcommand{\z}{-\infty}
\newcommand{\fonc}[1]{\left\llbracket #1\right\rrbracket}
\newcommand{\norm}[2][]{\left\Vert #2\right\Vert_{#1}}
\newcommand{\length}[1]{\left|#1\right|}
\newcommand{\semi}[1]{\left\langle #1 \right\rangle}
\newcommand{\sett}[2]{\left\{\left.#1\vphantom{#2}\right|#2\right\}}
\newcommand\JSR[3][]{\texttt{JSR}\ifthenelse{\equal{#1}{}}{}{_{#1}}^{#2}(#3)}
\newcommand\UR[3][]{\texttt{UR}\ifthenelse{\equal{#1}{}}{}{_{#1}}^{#2}(#3)}
\newcommand\Comp[3][]{\texttt{Pos}\ifthenelse{\equal{#1}{}}{}{_{#1}}^{#2}(#3)}
\renewcommand{\leq}{\leqslant}
\newcommand{\TODO}[2]{{\ifthenelse{\equal{#1}P}{\color{blue}Pierre}{\ifthenelse{\equal{#1}L}{\color{green}Laure}{\ifthenelse{\equal{#1}G}{\color{red}Glenn}{\color{purple}#1}}} TODO: #2}}
\newcommand{\ie}{\textit{i.e.}\ }
\DeclareMathOperator*{\urk}{urk}
\begin{document}

\title{Comparison of max-plus automata and \\ joint spectral radius of tropical matrices}

\author[1]{Laure Daviaud\footnote{The first author was partly supported by ANR Project ELICA ANR-14-CE25-0005, by ANR Project RECRE ANR-11-BS02-0010 and by project LIPA that has received funding from the European Research Council (ERC) under the European Union{'}s Horizon 2020 research and innovation programme (grant agreement No 683080).}}
\author[2]{Pierre Guillon}
\author[2]{Glenn Merlet}
\affil[1]{University of Warsaw \\ Poland}
\affil[2]{Universit\'e d'Aix-Marseille CNRS, Centrale Marseille, I2M, UMR 7373, 13453 Marseille, France}
\date{}                     
\setcounter{Maxaffil}{0}
\renewcommand\Affilfont{\itshape\small}

\maketitle

\begin{abstract}
Weighted automata over the tropical semiring $\Zm = (\Z\cup\{-\infty\},\max,+)$ are closely 
related to finitely generated semigroups of matrices over $\Zm$. 
In this paper, we use results
in automata theory to study two quantities associated with sets of matrices: the joint 
spectral radius and the ultimate rank. We prove that these two quantities are 
not computable over the tropical semiring, \ie there is no algorithm that 
takes as input a finite set of matrices $\Gamma$ and provides as output
the joint spectral radius \resp{the ultimate rank} of $\Gamma$. 
On the other hand, we prove that the 
joint spectral radius is nevertheless approximable and we exhibit restricted cases in which 
the joint spectral radius and the ultimate rank are computable. 
To reach this aim, we study the problem of comparing functions computed by 
weighted automata over the tropical 
semiring. This problem is known to be undecidable, and we prove that it remains 
undecidable in some specific subclasses of automata.
\end{abstract}

%
%

\section{Introduction}

Weighted automata were introduced by Schützenberger in \cite{Schutz61}
as a quantitative extension of nondeterministic finite automata. They
compute functions from the set of words over a finite alphabet to 
the set of values of a semiring, allowing one to model quantities such as costs, gains or 
probabilities. In this paper, we particularly focus on
max-plus automata: automata weighted within the tropical semiring 
$\Zm = (\Z\cup\{-\infty\},\max,+)$. 
A max-plus automaton is thus a nondeterministic finite automaton 
whose transitions are weighted by integers.
The value associated to a word $w$ depends on the runs labelled by $w$:
the weight of a given run is the sum of the weights of the transitions
in the run, 
and the weight of $w$ is the maximum of the weights of the accepting runs
labelled by $w$. This kind of automata is particularly suitable 
to model gain maximisation, to study worst-case complexity~\cite{ColDav14} 
and to describe discrete event systems~\cite{Gaub95, GaubMair99}.
The so-called linear presentation gives a matrix representation
of such an automaton. More precisely, there is a canonical way to associate
a max-plus automaton with a finitely generated semigroup of matrices 
over $\Zm$. Usually, the matrix representation is used to
provide algebraic proofs of automata results. In this paper, we use results in automata
theory to study two quantities related to sets of matrices: the joint spectral radius
and the ultimate rank. The joint spectral radius
generalises the notion of 
spectral radius for sets of matrices.
The ultimate rank unifies some usual other notions of ranks.
We link some comparison problems on max-plus automata
with the computation of these two quantities.
This leads to (1) prove results about comparison problems in 
some restricted classes of max-plus automata that
we believe to be interesting for themselves and (2) apply these results 
to the study of the computability of the joint spectral radius and the ultimate rank.

\paragraph*{Comparison of max-plus automata}
Decidability questions about the description of functions computed by 
max-plus automata have been intensively studied. 
In his celebrated paper~\cite{Krob92}, Krob proves the undecidability of the 
equivalence problem for max-plus automata: 
there is no algorithm to decide if two max-plus automata compute 
the same function.
In fact, his proof gives a stronger result: it is undecidable to determine
whether a max-plus automaton 
computes a positive function.
A more recent proof of this result, due to Colcombet, is based on a reduction
from the halting problem of two-counter machines ~\cite{Colcombet}.
By various reductions, this leads to the undecidability of several properties 
of automata with weights in different versions of the tropical semiring:
$(\N\cup\{-\infty\}, \max, +)$, $(\N\cup \{+\infty\}, \min, +)$...
The reader is referred to~\cite{LombardyMairesseSurvey} for a survey on these questions.

\paragraph*{Restriction on the parameters}
From the proof through two-counter machines, it can be derived that the undecidability remains even if the 
automata are restricted to have weights within $\{-1,1\}$.
In \cite{GaubertKatz2006}, Gaubert and Katz notice that the undecidability of the 
comparison also remains true 
even if the number of states of the automata is bounded by a certain integer $d$. This
extension is
based on Krob's original proof and on the use of a universal diophantine equation. 
However, they ask for a more direct proof that would allow one to control the bound $d$.
As an attempt to answer this question, we extend the proof through two-counter machines.
This allows a much sharper bound on the number of states
for which comparison is undecidable (Theorem~\ref{theorem:main})
than the one that would have followed from a universal diophantine equation.

\paragraph*{Restriction on the initial and final states}
The class of functions computed by max-plus automata that have all their states both initial and final 
is strictly included in the class of functions computed by max-plus automata.
However, it is closely related to the study of finitely generated semigroup of tropical matrices.
In this paper, we prove that comparison remains undecidable in this restricted class.
This involves a reduction from the general case, that was quite surprisingly never noticed before (Theorem~\ref{theorem:if}).

The result in~\cite{Almagor2011} is an alternative proof that the comparison is undecidable for min-plus automata with weights in \{-1,0,1\} and all states final 
(and we can deduce the same result for max-plus automata). 
Our result is stronger in that it proves that the comparison is undecidable for max-plus automata with weights in \{-1,0,1\} and all states both initial and final, 
as well as max-plus automata with a bounded number of states. Moreover our proof (and already Colcombet’s proof) constructs a polynomially-ambiguous max-plus automaton, 
proving that the undecidability still holds for this restricted class, that is not clear from the proof in~\cite{Almagor2011}.

\paragraph*{Joint spectral radius and ultimate rank}
Although the joint spectral radius is a well-studied notion when considering 
the semiring $(\R,+,\times)$ (see \cite{JSRBookJungers} and the references therein)
only few results are known when considering
the tropical semiring. 
As far as we know, the best known result is given in \cite{RhoMinNPHard}, where it is shown that the joint 
spectral radius is NP-hard to compute and to approximate for tropical matrices.
We drastically improve these results by proving that the joint spectral radius 
is not computable in the tropical semiring, \ie there is no algorithm that 
takes as input a finite set $\Gamma$ of matrices and provides as output
the joint spectral radius of $\Gamma$ (Theorem~\ref{t:fixedw}). 
As a corollary of this result, we also get the uncomputability of the 
ultimate rank, a notion introduced -- and a question raised -- in \cite{urk} (Theorem~\ref{t:ur}).

On the other hand, we also give positive results. 
By making a link with a result in \cite{ColDav13} about approximate comparison 
of max-plus automata, we 
prove that the joint spectral radius is approximable in EXPSPACE
(Theorem~\ref{theorem:approx}).
We also show that, when restricted to matrices with only finite rational entries,
the joint spectral radius and the ultimate rank can be computed 
in PSPACE (Theorem~\ref{t:FiniteEntries}).

\paragraph*{Organisation of the paper}
In Section~\ref{section:def}, we give the definitions and useful notions of 
tropical matrices and max-plus automata.
In Section~\ref{section:comparison}, we discuss about 
the undecidability of comparison of max-plus automata.
In Section~\ref{section:jsr}, we give the definitions of the joint spectral radius and ultimate rank and prove that these two quantities are uncomputable.
We also give positive results 
about the decidability of the approximation of the joint spectral radius
and its computation in restricted cases.
Section~\ref{a:comparaison} is devoted to the technical embedding of counter machines in weighted automata, useful for the undecidability of comparison. 

%
%

\section{Definitions and first properties}
\label{section:def}

In this section, we introduce definitions and notation of tropical matrices and max-plus automata.

\subsection{Tropical matrices}

A \dfn{semigroup} $(S,\cdot)$ is a set $S$ equipped with an
associative binary operation~`$\cdot$'.
If, furthermore, the product has a neutral element $1$, 
$(S,\cdot,1)$ is called a \dfn{monoid}.
The monoid is said \dfn{commutative} if $\cdot$ is commutative.
A \dfn{semiring} $(S,\oplus,\otimes,0_S,1_S)$ is a set $S$ equipped with 
two binary operations $\oplus$ and $\otimes$
such that $(S,\oplus,0_S)$ is a commutative monoid,
$(S,\otimes,1_S)$ is a monoid, $0_S$ is absorbing for $\otimes$,
and $\otimes$ distributes over $\oplus$.
We shall use the \dfn{tropical semiring}:
$$\Zm = (\Z\cup\{-\infty\},\max,+,-\infty,0)$$ 
Remark that $0_{\Zm}=-\infty$ and $1_{\Zm}=0$. We may also use
the restriction of $\Zm$ to the nonnegative integers, 
$(\N\cup\{-\infty\},\max,+,-\infty,0)$ denoted by $\Nm$.

\paragraph*{Semigroups of matrices} 
Let $S$ be a semiring. The set of matrices with $d$ rows  and 
$d'$ columns over $S$ is denoted $\matrice{d}{d'}{S}$, 
or simply $\matrices{d}{S}$ if $d=d'$. 
The set of all matrices over $S$ is $\M(S)$.
As usual, the product $A B$ for two matrices $A,B$
(provided the width of $A$ and the height of $B$ coincide, denoted here $d$) 
is defined as:
\begin{align*}
(A B)_{i,j} & = \bigoplus_{1\leq k \leq d}(A_{i,k}\otimes B_{k,j}) \\
& = \max_{1 \leq k \leq d}(A_{i,k}+B_{k,j}) 
\quad \text{for } S=\Zm
\end{align*}

The diagonal matrix
with $1_S$ (\ie, $0$ for $\Zm$) on the diagonal, 
and $0_S$ (\ie, $-\infty$ for $\Zm$) elsewhere is denoted $I_d$.
It is standard that $(\matrices{d}{S},\cdot,I_d)$ is a monoid. 

For a positive integer $k$, we use the notation 
$M^k=\underbrace{M\otimes\dotsm \otimes M}_{k \text{ times}}$. 
Moreover, $\norm[\infty]{M}$ denotes the maximal entry 
of a matrix~$M$ (it is not a norm).
For $k\in\Zm$ and $A\in\M(\Zm)$, $k\odot A$ is defined by $(k\odot A)_{ij}=k+A_{ij}$. 
For a set of matrices $\Gamma$, this notation is extended by $k\odot\Gamma=\sett{k\odot A}{A\in\Gamma}$. 
Finally, if $\Gamma\subset\matrices dS$, we note $\semi\Gamma$ the 
submonoid generated by $\Gamma$.

\paragraph*{Graph of a matrix}
Any square matrix $A\in\matrices d\Zm$, for $d$ a positive integer, 
can be represented by a graph $\G(A)$: the vertices are the indices $1,\ldots,d$, and there 
is an edge from $i$ to $j$, labelled $A_{i,j}$, if and only if the latter is finite.
The \dfn{spectral radius} $\rho(A)$ of a square matrix $A\in \matrices{d}{\Zm}$, for some 
positive integer $d$, known to be the limit $\lim_{n\to +\infty}\frac1n\norm[\infty]{A^n}$, can be seen as 
the maximal average weight of the cycles in $\G(A)$: 
\[\rho(A)=\max_{\substack{\ell\in\N\setminus\{0\} \\ 1 \leq i_1,
\ldots,i_\ell \leq d}}\left(\frac1\ell 
A_{i_1,i_2,\ldots,i_\ell}\right)\]
where $A_{i_1,i_2,\ldots,i_\ell}$ denotes the sum: $$A_{i_1,i_2}+A_{i_2,i_3}+\ldots+A_{i_{\ell-1},i_\ell}+A_{i_\ell,i_1}$$
The \dfn{critical graph} $\G_c(A)$ is the union of cycles ($i_1,\ldots,i_\ell)$ that achieve 
this maximum.
Its \dfn{strongly connected components} are the maximal sets of vertices $C \subseteq \G_c(A)$ 
such that for any $i,j\in C$ there is a path from $i$ to $j$ in $\G_c(A)$.
The \dfn{cyclicity of a strongly connected component} is the greatest common divisor of 
the length of its cycles. The \dfn{cyclicity of $\G_c(A)$} is the lowest common multiple 
of the cyclicities of its strongly connected components.

%
%

\subsection{Max-plus automata}
\label{subsection:max-plus-def}

We give the definition of max-plus automata that can be viewed as graphs or as
sets of matrices.

A \dfn{max-plus automaton}
$\A$ over the alphabet $\alphabet$ with $d$ states
is a map $\mu$ from $\alphabet$ to $\matrices{d}{\Zm}$ together with 
an initial vector $I \in \matrice{1}{d}{\{0,-\infty\}}$
and a final vector $F \in \matrice{d}{1}{\{0,-\infty\}}$\footnote
{
Note that, unlike variants in the literature, our weighted automata have no input or output weight
(that is, $I$ and $F$ have entries in $\{0,-\infty\}$), but 
this does not restrict the set of computed functions.
}. 
The map $\mu$ is uniquely extended into a morphism, also denoted $\mu$, from the semigroup $\alphabet^+$ of nonempty finite words over alphabet $\alphabet$
into $\matrices{d}{\Zm}$.
The \dfn{function computed by the automaton}, $\fonc\A$, maps each word $w\in\alphabet^+$
to $I \mu(w) F \in \Zm$. 
Sometimes, $0$ will denote the function constantly equal to $0$, and $\ge$ the induced partial order over functions $\alphabet^+\to\Zm$ (so that we can write things like $\fonc{\A} \geq 0$). 

Another way to represent a max-plus automaton is 
in terms of graphs.
Given a map $\mu$ from $\alphabet^+$ to $\matrices{d}{\Zm}$, the corresponding 
automaton has $d$ states $q_1, \ldots q_d$, that correspond to the
lines, or to the columns of the matrices. There is a transition from
$q_i$ to $q_j$ labelled by a letter $a\in \alphabet$, with weight $\mu(a)_{i,j}$, if and only if the latter is finite.
The initial\resp{final} states are the states $q_i$ such that $I_i=0$\resp{$F_i=0$}.
A run over the word $w$ is a path (a sequence of compatible transitions) 
in the graph, labelled by $w$. Its weight is the sum of the weights of the transitions. 
The weight of a given word $w$ is the maximum of the weights of the accepting runs 
(runs going from an initial state to a final state)
labelled by $w$. 
The weight of $w$, given by the graph representation, is exactly the value 
$I \mu(w) F$, given by the matrix presentation. 

Given a positive integer $d$ and a max-plus automaton $\A$ defined by some map $\mu : \alphabet \to \matrices d\Zm$, 
we note $\Gamma_\A=\sett{\mu(a)}{a\in \alphabet}$.
Then
the set of weights on the transitions of $\A$
corresponds to the
finite entries appearing in matrices of $\Gamma_\A$.

\begin{example}
Figure~\ref{figure:max-plus-automaton} gives the matrix and graph 
presentations of a max-plus automaton with $2$ states, both initial (ingoing arrow) 
and final (outgoing arrow),
over the alphabet $\{a,b\}$. 
The function that it computes associates a word $w$ to the value $\max(|w|_a,|w|_b)$
where $|w|_x$ denotes the number of occurrences of the letter $x \in \{a,b\}$ in $w$.
\end{example}

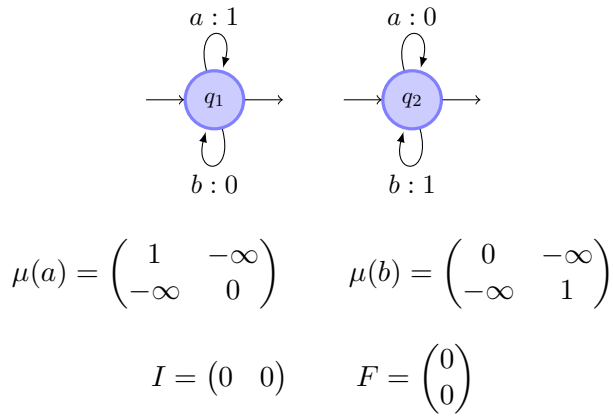
\begin{figure}[!htbp]
\begin{center}
\begin{tikzpicture}
\tikzset{
	every state/.style={draw=blue!50,very thick,fill=blue!20,scale=0.8},
	initial text=,
	accepting/.style={accepting by arrow},
	fleche/.style={->, >=latex}
}

\node[state, initial, accepting] (q_1) at (-1.3,0) {$q_1$};
\node[state, initial, accepting] (q_2) at (1.3,0) {$q_2$};

\path[fleche]     (q_1) edge [loop above] node         {\small{$a:1$}} ();
\path[fleche]     (q_1) edge [loop below] node         {\small{$b:0$}} ();
\path[fleche]     (q_2) edge [loop above] node         {\small{$a:0$}} ();
\path[fleche]     (q_2) edge [loop below] node         {\small{$b:1$}} ();
\end{tikzpicture}
\end{center}

\begin{center}
$\mu(a) = \begin{pmatrix} 1 & -\infty  \\ 
                         -\infty & 0 \end{pmatrix}$ \quad \quad
$\mu(b) = \begin{pmatrix} 0 & -\infty  \\ 
                         -\infty & 1  \end{pmatrix}$ \\
\vspace{0.4cm}
$I=\begin{pmatrix} 0 & 0  \end{pmatrix}$ \quad \quad
$F=\begin{pmatrix} 0\\ 
                   0 \end{pmatrix}$
\end{center}
\caption{\label{figure:max-plus-automaton}Graph and matrix representations of 
a max-plus automaton.}
\end{figure}

This work aims to link
results in automata theory with the study of semigroups of matrices.
Concepts defined over semigroups of matrices correspond to concepts over the subclass of automata in which all states are both initial and final, because if $M\in\matrices d{\Zm}$, and $I$ and $F$ have only $0$ entries, 
 then $IMF=\norm[\infty]M$, so that for this class of automata, $\fonc\A(w)=\norm[\infty]{\mu(w)}$ for every word $w$.

%
%

\section{Undecidability of the comparison of max-plus automata}
\label{section:comparison}

We are interested in the comparison problem, \ie deciding, given two max-plus automata $\A$ and $\B$,
whether $\fonc{\A} \leq \fonc{\B}$.

There exist (at least) two different proofs that this problem is undecidable. The original one by Krob \cite{Krob92} 
is a reduction from the tenth problem of Hilbert about diophantine equations. 
The proof is nicely written in \cite{LombardyMairesseSurvey},
where it encodes a homogoneous polynomial 
$P$ of degree $4$ on $n$ variables with integer coefficients into a max-plus automaton $\A$ computing 
a function with values in $\N$, such that
$P-1$ has a root in $\N^n$ if and only if there is a word $w$ such that $\fonc{\A}(w)=0$, \ie if $\fonc\A\ge1$.

A more recent proof by Colcombet \cite{Colcombet,Almagor2011}, is a reduction from the halting problem
of a two-counter machine. This computational model
was introduced by 
Minsky \cite{Minsky61, Minsky67}, and is as powerful as a Turing machine. They can be viewed as 
finite state machines with two counters that can be incremented,
and decremented if not $0$, 
and the idea of the proof is to embed them into max-plus automata. 

\subsection{Restriction on the parameters}

Different parameters can be taken into account when dealing with the size of a max-plus automaton:
we will focus on the number of states, the maximal and minimal weights appearing on the transitions and the size of the
alphabet.
When considering the matrix representation of an automaton $\A$, these parameters correspond respectively 
to the dimension, the maximal and minimal finite entries and the number of matrices in $\Gamma_\A$.

Regarding the size of the alphabet, by a classical encoding from an arbitrary alphabet to 
a two-letter alphabet, one can see that the comparison problem remains undecidable
when restricting to the class of automata on the binary alphabet.

Regarding the two other parameters, if they are both bounded, the problem becomes decidable
since we are now only considering a finite number of max-plus automata.
What is more interesting to study is when one of the parameter is bounded and not the other.
We will see that this problem remains undecidable in these cases, and our purpose is to give
bounds on these parameters that allow to keep the undecidability.

On the one hand, Gaubert and Katz notice in~\cite{GaubertKatz2006} that 
the original proof of Krob, applied to some specific diophantine equations gives that the problem
remains undecidable when bounding the number of states. They also raised 
the question of finding an alternative proof that could allow to control this number 
of states. We roughly counted how many states we would obtain by using a 
so-called universal diophantine equation given in \cite{jones82}, of degree $4$ with $58$ unknowns.
At the very 
least, we would be able to bound the number of states by $8700$
on a $6$-letter alphabet. 


On the other hand, the proof via two-counter machines allows to drastically improve this number, as we are going to see.

Define $\Comp kS$ \resp{$\Comp[d]kS$} as the following problem:
\textit{Given a max-plus automaton $\A$ on a $k$-letter alphabet, with weights in $S \subseteq \Zm$ 
\resp{and $d$ states}, determine whether $\fonc\A\ge0$.}

\begin{theorem}
\label{theorem:main}
Problems $\Comp2{\{-1,1\}}$ and $\Comp[553]6\Zm$ are 
 undecidable.
\end{theorem}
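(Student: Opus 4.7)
The plan is to reduce the halting problem for two-counter (Minsky) machines to the positivity problem $\Comp k S$, refining the encoding used by Colcombet \cite{Colcombet,Almagor2011}. Given a two-counter machine $M$, I construct a max-plus automaton $\A_M$ whose alphabet letters represent instructions, so that words encode candidate halting computation histories of $M$. The goal is to arrange that $\fonc{\A_M}(w)\ge 0$ on every syntactically incorrect or semantically inconsistent encoding, and $\fonc{\A_M}(w)<0$ on every genuine halting history. Then $\fonc{\A_M}\ge 0$ if and only if no such history exists, i.e.\ if and only if $M$ does not halt, which reduces both claims to the undecidability of the halting problem.

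To achieve this, $\A_M$ is a parallel composition (taking the max) of several \emph{defect-checker} branches. A \emph{format} branch returns a value $\ge 0$ on any word whose shape does not match the expected encoding, and has no accepting run otherwise; a \emph{control} branch does the same for illegal transitions of the finite control; for each counter, a pair of \emph{counter} branches detects any discrepancy between the claimed value of the counter and the net number of increments minus decrements, exploiting the max-plus identity $\max(n,-n)\ge 0$, which vanishes iff $n=0$; and a \emph{halting} branch returns a value $\ge 0$ whenever the computation has not reached the halting state. On a genuine halting encoding every branch is empty or returns a strictly negative bookkeeping weight, so $\fonc{\A_M}(w)<0$; on any defective word some branch contributes $\ge 0$.

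For the weight restriction $\{-1,1\}$, I design each transition to carry weight $\pm 1$, replacing transitions of larger magnitude by paths of unit-weight transitions padded with a neutral letter, and then reduce from six to two letters by a standard binary coding that preserves the $\{-1,1\}$ weight set up to a routine renaming. For the $553$-state bound on six letters, I fix a known small universal two-counter machine so that the simulated finite control has a constant number of states, use $O(1)$ extra states per checker, and share states across checkers wherever possible; a careful tally then yields the explicit bound of $553$, much sharper than the several-thousand bound that Krob's original reduction via a universal diophantine polynomial would produce.

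The main obstacle is the design of the counter branches. Each increment in the word must be non-deterministically paired with a matching decrement of the same counter, with the two branches assigning opposite weights to these pairings so that $\max$ in the tropical semiring detects every arithmetic mistake while leaving no accepting run on a correct computation. Coordinating these non-deterministic pairings across the full word, and simultaneously ensuring that the resulting automaton remains polynomially ambiguous (so that the strengthening to polynomially ambiguous automata announced in the introduction goes through), is the technical heart of the argument and is developed in Section~\ref{a:comparaison}.
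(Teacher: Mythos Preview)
Your treatment of $\Comp2{\{-1,1\}}$ matches the paper's: the Colcombet encoding with weights in $\{-1,0,1\}$, followed by splitting weight-$0$ edges into a $(-1)$-edge and a $(+1)$-edge via an alphabet encoding, and then the standard binary recoding of the alphabet.

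For $\Comp[553]6\Zm$, however, there is a genuine gap. You write that you ``fix a known small universal two-counter machine so that the simulated finite control has a constant number of states'' and then tally states to get $553$. But if the machine is fixed and the construction is the one you described above it, you obtain a \emph{single} automaton $\A_M$, and asking whether $\fonc{\A_M}\ge 0$ is a single yes/no question, not an undecidable family of instances. You never say what varies in the reduction, nor how that varying input is encoded in a $553$-state automaton.

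The missing idea, and the paper's actual contribution here, is that the input to the reduction is the integer $n$ on which the universal machine is started (with counters initialised to $(n,0)$), and this $n$ is encoded in the \emph{weights} of two extra transitions (values $n-1$ and $-n-1$). A small additional checker (Figure~\ref{figure:krob3}) verifies that the encoded run begins with the block $a^n$, using $\max(m-n-1,\,n-m-1)=|m-n|-1$, which is $-1$ iff $m=n$ and $\ge 0$ otherwise. This is precisely why the second problem allows weights in all of $\Zm$: the number of states is now bounded (by $2\cdot|Q|+O(1)$ with $|Q|$ the state count of the universal machine), but the weights are unbounded because they carry $n$. Without this gadget your sketch does not yield a reduction for the bounded-state case. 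Your description of the counter checkers (``pairing increments with decrements'') is also not quite what the paper does: counter values are written explicitly as $a^kb^m$ blocks in the word, and the checkers compare \emph{consecutive} blocks via $\max(k-k',\,k'-k-2)$-style gadgets, which is what makes the zero tests checkable and keeps the state count additive in $|Q|$.
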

The first statement is derived rather directly from the proof in \cite{Colcombet,Almagor2011}:
In the proof, weights in $\{-1,0,1\}$ are used, but it is easy to see that with an encoding of the alphabet, every transition with weight $0$ can be replaced by a transition with weight $-1$ followed by one with weight $1$.
This set of weights is clearly minimal.

The undecidability of Problem $\Comp[553]6\Zm$ is a contribution of the present paper.
We extend the construction so that it can simulate two-counter machines on any input (the initial values of the counters).
The input $n\in\N$ is now encoded by an additional widget involving two edges with weights $n$ and $-n$ (it is clear that if the weights were also bounded, the problem would be decidable).
This allows to reduce the halting problem of a universal two-counter machine.
The full proof is given in Section \ref{a:comparaison}.

\subsection{Restriction on initial and final states}

The class of functions computed by max-plus automata having all their states initial and final
is a strict subclass of the functions computed by max-plus automaton.
However, the following lemma shows that the comparison problem remains undecidable in this subclass.

\begin{lemma}
\label{lemma:if}
Let $\alphabet$ be a finite alphabet and $\star \notin \alphabet$ be a special symbol.
Given a max-plus automaton $\A$ on $\alphabet$, 
with $d$ states and weights within a set $S$,
one can build a max-plus automaton $\A'$ on $\alphabet'=\alphabet\cup \{\star\}$, with $d+1$ states, all of which are initial and final, and 
weights within $S \cup \{0\}$,
such that:
\[\min(\inf_{u\in\alphabet^+}\frac{\fonc\A(u)}{\length u},0)\le\inf_{w\in{\alphabet'}^+}\frac{\fonc{\A'}(w)}{\length w}\]
and
\[\inf_{u\in{\alphabet'}^+}\frac{\fonc{\A'}(u)}{\length u}\le\inf_{w\in\alphabet^+}\frac{\fonc\A(w)}{\length w+1}\]
In particular,
$\fonc{\A} \geq 0$ if and only if $\fonc{\A'} \geq 0$.
\end{lemma}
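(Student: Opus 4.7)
My plan is to construct $\A'$ as follows. Take states $Q \cup \{q_0\}$ with $q_0$ a fresh state; declare every state initial and final ($I'$, $F'$ identically $0$). For $a \in \alphabet$, keep the $\A$-transitions on $Q$ and add no transition involving $q_0$. For $\star$, add only the following weight-$0$ transitions: $q_0 \to q_0$; $q_0 \to q$ for every $\A$-initial $q$; $q \to q_0$ for every $\A$-final $q$; and $q \to q'$ for every $\A$-final $q$ and $\A$-initial $q'$. This gives $d+1$ states, uses weights in $S \cup \{0\}$, and makes every state both initial and final.

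The engine of the proof will be the rank-one structure of $\mu'(\star)$: letting $\tilde F$ extend $F$ by a $0$ in the $q_0$-component and $\tilde I$ extend $I$ similarly, a direct check gives $\mu'(\star) = \tilde F\, \tilde I$ together with the two identities $\tilde I\, \mu'(v)\, \tilde F = \fonc{\A}(v)$ for $v \in \alphabet^+$ and $\tilde I\, \tilde F = 0$. I will then decompose any $u \in (\alphabet')^+$ containing at least one $\star$ as $u = v_0 \star v_1 \star \cdots \star v_k$ with $v_i \in \alphabet^*$ and telescope $\mu'(u)$ using $\mu'(\star) = \tilde F\, \tilde I$: the product collapses into the outer product $(\mu'(v_0)\tilde F)(\tilde I\, \mu'(v_k))$ shifted by the scalar $\sum_{i=1}^{k-1} \fonc{\A}(v_i)$ (with $\fonc{\A}$ of the empty word understood as $0$). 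Reading off the maximal entry will yield the closed form
\[\fonc{\A'}(u) = L(v_0) + J(v_k) + \sum_{i=1}^{k-1}\fonc{\A}(v_i),\]
where $L(v) := \norm[\infty]{\mu(v)F}$ and $J(v) := \norm[\infty]{I\mu(v)}$ on nonempty $v$, both equal to $0$ on the empty word. The degenerate case of $u$ without any $\star$ will be handled directly by $\fonc{\A'}(u) = \norm[\infty]{\mu(u)} \ge \fonc{\A}(u)$.

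Since $L, J \ge \fonc{\A}$, the formula gives $\fonc{\A'}(u) \ge \sum_i \fonc{\A}(v_i)$. Setting $c := \inf_{w\in\alphabet^+} \fonc{\A}(w)/\length w$, each summand is at least $\min(c,0)\length{v_i}$, so $\fonc{\A'}(u) \ge \min(c,0)(\length u - k) \ge \min(c,0)\length u$, which is the first inequality. For the second, I will apply the formula to $u_n := (\star w)^n$, of length $n(\length w + 1)$, with $v_0 = \varepsilon$ and $v_1 = \cdots = v_n = w$: the formula reduces to $\fonc{\A'}(u_n) = (n-1)\fonc{\A}(w) + J(w)$, so $\fonc{\A'}(u_n)/\length{u_n} \to \fonc{\A}(w)/(\length w + 1)$ as $n \to \infty$, and infimizing over $w$ closes the bound. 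The ``in particular'' iff follows immediately from these two inequalities.

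The main delicate point will be the uniform handling of degenerate blocks in the decomposition: empty $v_i$ arising when $u$ starts or ends with $\star$, or when $u$ contains consecutive $\star\star$. The identity $\tilde I\, \tilde F = 0$, together with the conventions $L(\text{empty}) = J(\text{empty}) = 0$, is exactly what will let the telescoping swallow these degenerate cases without a separate case distinction, so that the single formula above covers every $u$ with at least one $\star$.
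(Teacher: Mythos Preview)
Your construction and argument match the paper's proof. The only minor differences are stylistic: you phrase the key computation via the rank-one factorisation $\mu'(\star)=\tilde F\,\tilde I$ (yielding an exact closed formula for $\fonc{\A'}$) where the paper argues directly in terms of runs, and you test the second inequality with $(\star w)^n$ whereas the paper uses $(\star w)^k\star$, which gives the slightly cleaner exact identity $\fonc{\A'}((\star w)^k\star)=k\,\fonc\A(w)$ without the $J(w)$ correction term.
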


\begin{proof}
Consider a max-plus automaton $\A$ defined by a map $\mu: \alphabet \to \matrices{d}{\Zm}$,
an initial vector $I$ and a final vector $F$, and a new symbol $\star$. Let $\alphabet' = \alphabet \cup \{\star\}$.
The idea is to construct a new automaton $\A'$ by adding a new state $q$ and
transitions from every final state of $\A$ to every initial state of $\A$
as well as transitions from every final state of $\A$ to $q$, loops around $q$ and transitions from $q$ to every initial state of $\A$, all labelled by $\star$ with weight $0$. 
All the states of the new automaton $\A'$ are initial and final. Let us note $\mu'$, $I'$ and $F'$ defining this new automaton.


Any word $w\in \alphabet'^+ \backslash \{\star\}^*$ can be written:
$$w=\star^{n_0}w_1\star^{n_1} w_2 \star^{n_2} \ldots w_k \star^{n_k}$$
where for all $1\leq i \leq k$, $w_i \in \alphabet^+$ and for all 
$0< i < k$, $n_i > 0$, $n_0\geq 0$ and $n_k \geq 0$.
We get: 
$$\fonc{\A'}(w)=\norm[\infty]{\mu'(w)}\ge \sum_{i=1}^k I\mu(w_i)F$$
since the weight of $\star$ is $0$. This is at least $\sum_{i=1}^k \length{w_i}\inf_u\frac{\fonc\A(u)}{\length u}$.
If $\fonc\A\ge0$, then we get $\fonc{\A'}(w)\ge0$.
Otherwise, $\inf_u\frac{\fonc\A(u)}{\length u}<0$, and since $\sum_i\length{w_i}\le\length w$, we get 
$\frac{\fonc{\A'}(w)}{\length w}\ge\inf_u\frac{\fonc\A(u)}{\length u}$.
Moreover, since the weights of the words in $\{\star\}^*$ is $0$ in $\A'$, then the inequality holds.

The other inequality is obtained by observing how arcs labelled $\star$ are positioned in $\A'$. Indeed, if a transition labelled by $\star$ is taken, then it has to start from a final state or $q$, and has to end in an initial state or $q$. Moreover, no other letter labels a transition starting or ending in $q$. So, when reading a word $w \in \alphabet^+$ between two $\star$, this word is read on a path that was already an existing accepting path in $\A$.

Thus, we see that for all words $w\in\alphabet^+$
and all $k\in\N$: 
$$\fonc{\A'}((\star w)^k\star)=k\fonc\A(w)$$ 
so that:
\begin{align*}
\inf_{u\in{\alphabet'}^+}\frac{\fonc{\A'}(u)}{\length u} & \le\inf_{\substack{w\in\alphabet^+ \\ k\in\N}}\frac{\fonc{\A'}((\star w)^k\star)}{k(\length w+1)+1} \\
& \le\inf_{\substack{w\in\alphabet^+ \\ k\in\N}}\frac{\fonc{\A}(w)}{\length w+1+1/k}
\end{align*}
\end{proof}

As a corollary of this lemma and of the previous results on the 
undecidability of comparison, we get the following theorem.
\begin{theorem}
\label{theorem:if}
The restrictions of Problems $\Comp3{\{-1,0,1\}}$ and $\Comp[554]7\Zm$ to automata whose states are all initial and final are still
 undecidable.
\end{theorem}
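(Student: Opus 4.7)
The plan is to derive Theorem~\ref{theorem:if} as a direct corollary of Lemma~\ref{lemma:if} applied to the two undecidable problems identified in Theorem~\ref{theorem:main}. That is, I would reduce from $\Comp{2}{\{-1,1\}}$ and $\Comp[553]{6}{\Zm}$ respectively, by mapping every instance $\A$ to the automaton $\A'$ constructed in the proof of Lemma~\ref{lemma:if}.

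First, I would check the parameters. Starting from an automaton $\A$ over a $k$-letter alphabet with $d$ states and weights in $S$, the construction of Lemma~\ref{lemma:if} produces an automaton $\A'$ on a $(k+1)$-letter alphabet with $d+1$ states, weights in $S\cup\{0\}$, and with all states both initial and final. For the first reduction, taking $k=2$, $S=\{-1,1\}$ yields $\A'$ on a $3$-letter alphabet with weights in $\{-1,0,1\}$; for the second, taking $k=6$, $d=553$, $S=\Zm$ yields $\A'$ on a $7$-letter alphabet with $554$ states and weights in $\Zm$. In both cases the ``all states initial and final'' constraint is satisfied by construction.

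Second, I would invoke the equivalence stated at the end of Lemma~\ref{lemma:if}, namely that $\fonc{\A}\ge 0$ if and only if $\fonc{\A'}\ge 0$. Since the reduction $\A\mapsto\A'$ is clearly computable (it adds a single state and a fixed pattern of transitions with weight $0$), any decision procedure for the restricted problems $\Comp{3}{\{-1,0,1\}}$ or $\Comp[554]{7}{\Zm}$ (limited to automata whose states are all initial and final) would yield a decision procedure for $\Comp{2}{\{-1,1\}}$ or $\Comp[553]{6}{\Zm}$ respectively, contradicting Theorem~\ref{theorem:main}.

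There is no genuine obstacle here, since Lemma~\ref{lemma:if} already packages the construction and the key equivalence; the only thing to verify carefully is the bookkeeping of parameters (number of states, alphabet size, weight set), which goes through cleanly in both cases.
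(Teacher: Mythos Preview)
Your proposal is correct and matches the paper's approach exactly: the paper presents Theorem~\ref{theorem:if} as an immediate corollary of Lemma~\ref{lemma:if} combined with Theorem~\ref{theorem:main}, and your parameter bookkeeping (alphabet size $k\mapsto k+1$, states $d\mapsto d+1$, weights $S\mapsto S\cup\{0\}$) is precisely what the paper leaves implicit.
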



%
%

\section{Uncomputability of the joint spectral radius of tropical matrices}
\label{section:jsr}

\subsection{Joint spectral radius} 

The definition of spectral radius extends to the \dfn{joint spectral radius} of a set
$\Gamma\subseteq \matrices{d}{\Zm}$ of matrices, as follows:
\[
\rho(\Gamma)=\inf_{\ell > 0}
\sett{\frac1\ell \norm[\infty]{M_1\dotsm M_\ell}}{M_1, \ldots, M_\ell 
\in \Gamma}
\]

The following lemma, which gives other equivalent definitions\footnote{
Note that here we use the $\inf$ definition for the joint spectral radius instead of 
the $\sup$ definition used in the literature. The latter is easy to compute in $\Zm$, 
unlike the notion considered here (sometimes called lower spectral radius of joint spectral subradius).},
is a known application of Fekete's subadditive
lemma (see for example \cite[Theorem 3.4]{GaubertMairesse98}). 

\begin{lemma}
\label{l:defrho} 
For any set $\Gamma$ of matrices in $\matrices{d}{\Zm}$,
we have:
\begin{align*}
&\rho(\Gamma)\\
&=\lim_{\ell\to\infty}\min\sett{\frac1\ell\norm[\infty]{M_1\cdots M_\ell}}{M_1,\ldots,M_\ell\in\Gamma}\\
&=\inf_{\ell > 0}
\sett{\frac1\ell{\rho(M_1\dotsm M_\ell)}}{M_1, \ldots, M_\ell \in \Gamma}
\end{align*}
\end{lemma}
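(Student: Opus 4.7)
The plan is to prove the two equalities separately, using Fekete's subadditive lemma for the first and sandwiching arguments for the second.

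First I would set $a_\ell = \min\sett{\norm[\infty]{M_1\cdots M_\ell}}{M_1,\ldots,M_\ell\in\Gamma}$ (well-defined since $\Gamma$ is finite---or else take the $\inf$, the argument is unchanged). The starting observation is that tropical matrix multiplication is subadditive for $\norm[\infty]\cdot$: from $(MN)_{ij}=\max_k(M_{ik}+N_{kj})\le \norm[\infty]M+\norm[\infty]N$, one gets $\norm[\infty]{MN}\le\norm[\infty]M+\norm[\infty]N$. Picking minimizers in $\Gamma^\ell$ and $\Gamma^m$ and concatenating yields $a_{\ell+m}\le a_\ell+a_m$. Fekete's lemma then gives $\lim_{\ell\to\infty}\frac{a_\ell}\ell=\inf_{\ell>0}\frac{a_\ell}\ell$, which is the first equality in the statement (the right-hand side being, by definition, $\rho(\Gamma)$).

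Second, I would compare the ``$\norm[\infty]{}$'' and ``$\rho$'' versions. The easy direction uses that for any single matrix $M$, iterated subadditivity gives $\norm[\infty]{M^n}\le n\norm[\infty]M$, hence $\rho(M)=\lim_n\frac{\norm[\infty]{M^n}}n\le\norm[\infty]M$. Applied to $M=M_1\cdots M_\ell$, this yields $\frac{\rho(M_1\cdots M_\ell)}\ell\le\frac{\norm[\infty]{M_1\cdots M_\ell}}\ell$, so the ``$\rho$''-infimum is $\le\rho(\Gamma)$. For the reverse direction, fix any $M_1,\ldots,M_\ell\in\Gamma$ and note that $(M_1\cdots M_\ell)^n$ is a product of $n\ell$ matrices from $\Gamma$, so $\norm[\infty]{(M_1\cdots M_\ell)^n}\ge a_{n\ell}$. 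Dividing by $n\ell$ and letting $n\to\infty$ gives
\[\frac{\rho(M_1\cdots M_\ell)}\ell=\lim_{n\to\infty}\frac{\norm[\infty]{(M_1\cdots M_\ell)^n}}{n\ell}\ge\lim_{n\to\infty}\frac{a_{n\ell}}{n\ell}=\rho(\Gamma),\]
using the first equality for the final step. Taking the infimum over $\ell$ and the $M_i$ then yields $\rho(\Gamma)\le\inf_\ell\frac1\ell\min_{M_i\in\Gamma}\rho(M_1\cdots M_\ell)$, closing the sandwich.

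There is no real obstacle: the only point requiring care is verifying subadditivity of $\norm[\infty]\cdot$ under tropical multiplication (so that Fekete's lemma applies to $(a_\ell)_\ell$) and making sure the spectral-radius limit $\rho(M)=\lim_n\frac{\norm[\infty]{M^n}}n$, used in both directions of the second equality, is justified---but this is a classical consequence of the same subadditivity applied to the sequence $\norm[\infty]{M^n}$.
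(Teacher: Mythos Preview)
Your proposal is correct and follows essentially the same approach as the paper: Fekete's lemma on the subadditive sequence $a_\ell$ for the first equality, then the inequality $\rho(M)\le\norm[\infty]M$ plus the observation $\norm[\infty]{(M_1\cdots M_\ell)^n}\ge a_{n\ell}$ for the second. Your presentation of the reverse inequality in the second part is marginally cleaner than the paper's $\varepsilon$-argument, but the underlying idea is identical.
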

\begin{proof}
Let: 
$$u_\ell = \inf
\sett{\norm[\infty]{M_1\dotsm M_\ell}}{M_1, \ldots, M_\ell \in \Gamma}$$
The sequence $(u_\ell)_\ell$ is subadditive \ie for all $\ell,\ell'$, 
$u_{\ell+\ell'} \leq u_\ell + u_{\ell'}$. Indeed for all 
$M_1, \ldots, M_{\ell+\ell'} \in \Gamma$, 
\begin{align*}
&\norm[\infty]{M_1\dotsm M_{\ell+\ell'}} \\
&\leq \norm[\infty]{M_1\dotsm M_\ell} + 
\norm[\infty]{M_{\ell+1}\dotsm M_{\ell+\ell'}}
\end{align*}
Thus by Fekete's lemma, 
$\lim_{\ell\to\infty} \frac{u_\ell}{\ell}$ is well defined and
$ \inf_{\ell > 0}\frac{u_\ell}{\ell} = \lim_{\ell\to\infty} \frac{u_\ell}{\ell}$, 
which implies
\begin{align*}
&\rho(\Gamma) \\
&= \lim_{\ell\to\infty} \inf
\sett{\frac1\ell\norm[\infty]{M_1\dotsm M_\ell}}{M_1, \ldots, M_\ell \in \Gamma}
\end{align*}

As for the second equality, let us denote: 
\[\rho'(\Gamma)= \inf_{\ell > 0}
\sett{\frac1\ell{\rho(M_1\dotsm M_\ell)}}{M_1, \ldots, M_\ell \in \Gamma}\]
Since for all matrices $M$,
$\rho(M)\leq \norm[\infty]{M}$, we have $\rho'(\Gamma) \leq \rho(\Gamma)$.
Let us show the reverse inequality.
For all $\varepsilon>0$, there is $\ell>0$ and $M_1, \ldots, M_\ell \in \Gamma$ 
such that $\frac1\ell \rho(M_1\cdots M_\ell) \leq \rho'(\Gamma) + \varepsilon$.
By definition, it means that $\frac1\ell \lim_{n} 
\frac{1}{n} \norm[\infty]{(M_1\cdots M_\ell)^n} \leq \rho'(\Gamma) + \varepsilon$,
or equivalently, $\lim_{n}  \frac{1}{n \ell} \norm[\infty]{(M_1\cdots M_\ell)^n} 
\leq \rho'(\Gamma) + \varepsilon$.
By definition, $\rho(\Gamma) \leq \lim_{n}  \frac{1}{n \ell} 
\norm[\infty]{(M_1\cdots M_\ell)^n}$, thus, for all $\varepsilon > 0$, 
$\rho(\Gamma) \leq \rho'(\Gamma) + \varepsilon$, that concludes the proof.
\end{proof}

It can be easily seen that $\rho(k\odot\Gamma)=\rho(\Gamma)+k$.

\subsection{Ultimate rank}

In the classical setting of a field, the notion of rank enjoys many equivalent definitions.
These notions do not coincide in the case of $\Zm$.
However, it was noticed in~\cite{urk} that they coincide on the limit points of 
the powers of the matrix, when properly normalized (or considered projectively).
This is formalized in \cite[Theorem 5.2]{urk}, and equivalent to the following definition:
the \dfn{ultimate rank} $\urk(M)$ of a matrix $M \in \matrices{d}{\Zm}$ is the
sum of the cyclicities of the strongly connected components of its critical graph.
Clearly, $\urk(M)=0$ ($M$ has empty critical graph) if and only if $\rho(M)=-\infty$,
and this corresponds to the nilpotency of~$M$.

As for the joint spectral radius, this notion can be generalized to sets of matrices.
The \dfn{ultimate rank} of a set $\Gamma \subseteq \matrices{d}{\Zm}$ of matrices
is: 
$$\urk(\Gamma)=\min\sett{\urk(M)}{M\in\semi\Gamma}$$
Clearly, $\urk(\Gamma)=0$ if and only if $\rho(\Gamma)=-\infty$,
and this corresponds to the mortality of the semigroup generated by~$\Gamma$.
It can be seen (or read in \cite[Theorem 5.2]{urk}) that the ultimate rank is a projective notion:
$\urk(k\odot\Gamma)=\urk(\Gamma)$ for any~$k\in\Z$.

{
In some interesting cases, $\urk(\Gamma)$ is indeed the reached minimum of the ranks in the semi-group
, so that it is the dimension of the
limit set of the action of~$\Gamma$ on~$\R^d$.
Those cases include sets with irreducible fixed structure (all matrices have the same infinite entries),
and sets of matrices with no line of~$-\infty$ that contain one matrix with only finite entries.
This is implicitely used in~\cite{MERLET2010} and allows to extend some nice properties of products of random matrices
from matrices with the so-called memory-loss property (case~$\urk(\Gamma)=1)$ to more general ones (\cite[Corollary 1.2]{MERLET2010}).
}
\subsection{Uncomputability and link with automata}

Finitely generated semigroups of matrices 
exactly correspond to max-plus automata that 
have all their states initial and final. 
In particular, the following lemma links the computation of the joint spectral radius for the former to the comparison for the latter.

\begin{lemma}
\label{l:negrad}
Let $\A$ be a max-plus automaton over an alphabet
$\alphabet$ whose all states are both initial and final.
The following statements are equivalent.
\begin{enumerate}
\item $\fonc{\A} \geq 0$. \label{l:negrad-un}
\item For all matrices $M$ in $\semi{\Gamma_\A}$, $\norm[\infty]{M} \geq 0$. 
\label{l:negrad-deux}
\item For all matrices $M$ in $\semi{\Gamma_\A}$, $\rho(M) \geq 0$.
\label{l:negrad-quatre}
\item $\rho(\Gamma_{\A}) \geq 0$.
\label{l:negrad-cinq}
\end{enumerate}
\end{lemma}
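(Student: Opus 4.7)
The plan is to establish the three pairwise equivalences (1)$\Leftrightarrow$(2), (2)$\Leftrightarrow$(3), (3)$\Leftrightarrow$(4) independently, since each uses a different ingredient.

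The key preliminary observation, already flagged at the end of Section~\ref{subsection:max-plus-def}, is that when $I$ and $F$ have only $0$ entries one has $IMF=\norm[\infty]{M}$, so $\fonc{\A}(w)=\norm[\infty]{\mu(w)}$ for every $w\in\alphabet^+$. Moreover, because $\mu$ is a morphism extending the map $a\mapsto\mu(a)$ whose image is $\Gamma_\A$, we have $\mu(\alphabet^+)\cup\{I_d\}=\semi{\Gamma_\A}$. Since $\norm[\infty]{I_d}=0$, the equivalence (\ref{l:negrad-un})$\Leftrightarrow$(\ref{l:negrad-deux}) is then immediate: condition (\ref{l:negrad-un}) asks $\norm[\infty]{\mu(w)}\ge0$ for every nonempty word, while (\ref{l:negrad-deux}) additionally asks the same for $I_d$, which holds trivially.

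For (\ref{l:negrad-deux})$\Leftrightarrow$(\ref{l:negrad-quatre}), I would use the standard tropical inequality $\rho(M)\le\norm[\infty]{M}$, which follows from submultiplicativity $\norm[\infty]{M^n}\le n\,\norm[\infty]{M}$ together with the limit form $\rho(M)=\lim_n\frac1n\norm[\infty]{M^n}$. This gives (\ref{l:negrad-quatre})$\Rightarrow$(\ref{l:negrad-deux}) on the spot. For the converse, I would use that $\semi{\Gamma_\A}$ is closed under powers: for any $M\in\semi{\Gamma_\A}$, every $M^n$ lies in $\semi{\Gamma_\A}$, so $\norm[\infty]{M^n}\ge0$ by (\ref{l:negrad-deux}), and passing to the limit yields $\rho(M)\ge0$.

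Finally, (\ref{l:negrad-quatre})$\Leftrightarrow$(\ref{l:negrad-cinq}) follows directly from the second identity in Lemma~\ref{l:defrho}, namely $\rho(\Gamma_\A)=\inf_{\ell>0}\sett{\frac1\ell\rho(M_1\cdots M_\ell)}{M_i\in\Gamma_\A}$. The right-hand side is nonnegative precisely when every nonempty product of generators has nonnegative spectral radius, which — once again allowing for $\rho(I_d)=0$ trivially — is (\ref{l:negrad-quatre}).

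There is no real obstacle here: once the identity $\fonc{\A}(w)=\norm[\infty]{\mu(w)}$ is in hand, each of the three equivalences reduces to a one-liner. The only point requiring care is the edge case of the identity matrix when translating between the monoid $\semi{\Gamma_\A}$ and the image semigroup $\mu(\alphabet^+)$, and between statements that quantify over products of arbitrary length $\ell\ge1$ versus over arbitrary elements of $\semi{\Gamma_\A}$.
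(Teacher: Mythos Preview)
Your proof is correct and follows essentially the same chain (\ref{l:negrad-un})$\Leftrightarrow$(\ref{l:negrad-deux})$\Leftrightarrow$(\ref{l:negrad-quatre})$\Leftrightarrow$(\ref{l:negrad-cinq}) as the paper, with the same ingredients: the identity $\fonc\A(w)=\norm[\infty]{\mu(w)}$, the relation between $\rho(M)$ and $\norm[\infty]{M^n}$, and Lemma~\ref{l:defrho}. You are in fact slightly more careful than the paper, which glosses over (\ref{l:negrad-deux})$\Leftrightarrow$(\ref{l:negrad-quatre}) in one line and does not mention the identity-matrix edge case when identifying $\semi{\Gamma_\A}$ with $\mu(\alphabet^+)$.
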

According to the terminology in \cite{Almagor2011}, this also corresponds to the case when $\A$ is called \dfn{universal with threshold $0$}.
\begin{proof}
Items~\ref{l:negrad-un}. and~\ref{l:negrad-deux}. are equivalent since all the
states of $\A$ are both initial and final. Thus, 
for all words $w$, $\fonc{\A}(w) = \norm[\infty]{\mu(w)}$. Moreover,
$\semi{\Gamma_\A}$ is exactly the set $\sett{\mu(w) }{ w\in \alphabet^+}$.

Items~\ref{l:negrad-deux}. and~\ref{l:negrad-quatre}. are equivalent by definition
of the joint spectral radius.

Finally, Items~\ref{l:negrad-quatre}. and~\ref{l:negrad-cinq}. are equivalent by 
Lemma~\ref{l:defrho}. 
\end{proof}

The uncomputability of the joint spectral radius is deduced from the equivalence in Lemma~\ref{l:negrad} and from Lemma~\ref{lemma:if}.
More precisely, define $\JSR kS$ \resp{$\JSR[d]kS$} as the following problem: 
\textit{Given a finite set of $k$ matrices with coefficients in $S \subseteq \Zm$ 
\resp{and dimension $d$},
determine whether their joint spectral radius
is greater than or equal to $0$.}

\begin{theorem}
\label{t:fixedw}
Problems $\JSR3{\{-\infty,-1,0,1\}}$ and $\JSR[554]7\Zm$ are undecidable.
\end{theorem}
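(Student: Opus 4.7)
The plan is to obtain both undecidability results as a direct reduction from the restricted comparison problems of Theorem~\ref{theorem:if}, using the equivalence provided by Lemma~\ref{l:negrad}. The reduction is simply the identity map $\A \mapsto \Gamma_\A$ associating to a max-plus automaton its generating set of transition matrices.

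First I would check the parameter bookkeeping. Given a max-plus automaton $\A$ over a $k$-letter alphabet $\alphabet$, with $d$ states, all initial and final, and transition weights in some set $S \subseteq \Z$, the associated family $\Gamma_\A = \sett{\mu(a)}{a\in\alphabet}$ consists of at most $k$ matrices of $\matrices{d}{\Zm}$, whose finite entries are exactly the weights of the transitions of $\A$, and whose $-\infty$ entries encode absent transitions. Thus $\Gamma_\A$ fits into $\matrices{d}{S \cup \{-\infty\}}$ and has cardinality at most $k$. The map $\A \mapsto \Gamma_\A$ is therefore a polynomial-time many-one reduction preserving the number of matrices, the dimension, and the set of finite entries.

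Next I would invoke Lemma~\ref{l:negrad}: since $\A$ has all its states initial and final, the equivalence between items 1 and 4 of that lemma gives $\fonc\A \geq 0 \iff \rho(\Gamma_\A) \geq 0$. Hence the restriction of $\Comp kS$ to automata with all states initial and final reduces to $\JSR{k}{S \cup \{-\infty\}}$ with the same dimension. Applying this to the two statements of Theorem~\ref{theorem:if}, the restriction of $\Comp3{\{-1,0,1\}}$ reduces to $\JSR3{\{-\infty,-1,0,1\}}$, and the restriction of $\Comp[554]7\Zm$ reduces to $\JSR[554]7\Zm$, yielding both undecidability results.

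There is no substantial obstacle here: all the technical work was done in Theorem~\ref{theorem:if} (by Lemma~\ref{lemma:if}, which arranges the all-states-initial-and-final restriction without degrading the alphabet size, weight set, or state count too much) and in Lemma~\ref{l:negrad} (which converts the comparison to the threshold question for $\rho$). The only point to be careful about is that $\Gamma_\A$ may contain fewer than $k$ distinct matrices, but this is harmless since the $\JSR k\cdot$ problem allows up to $k$ matrices. The proof is therefore a short assembly of these earlier ingredients.
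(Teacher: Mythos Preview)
Your proposal is correct and follows the same approach as the paper: reduce from the restricted comparison problems of Theorem~\ref{theorem:if} via the map $\A\mapsto\Gamma_\A$, and invoke Lemma~\ref{l:negrad} to translate $\fonc\A\ge0$ into $\rho(\Gamma_\A)\ge0$. Your version is simply more explicit about the parameter bookkeeping (alphabet size $\to$ number of matrices, state count $\to$ dimension, weight set $\to$ entry set augmented by $-\infty$), which the paper leaves implicit.
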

\begin{proof}
The undecidability comes from a reduction from the problem stated in Theorem~\ref{theorem:if}.
Consider a max-plus automaton $\A$ whose states are all initial and final. 
By Lemma~\ref{l:negrad}, $\fonc{\A} \geq 0$ if and only if the joint spectral
radius of $\Gamma_{\A}$ is nonnegative. Thus $\JSR3{\{-\infty,-1,0,1\}}$
and $\JSR[554]7\Zm$ are undecidable.
\end{proof}



By reduction from Theorem~\ref{t:fixedw}, 
we prove that the ultimate rank is also uncomputable.
Define $\UR kS$ \resp{$\UR[d]kS$} as the following problem: 
\textit{Given a finite set of $k$ matrices with coefficients in $S$ 
\resp{and dimension $d$},
determine whether the ultimate rank of the semigroup that they generate 
is equal to $1$.}

\begin{theorem}
\label{t:ur}
Problems $\UR3{\{-\infty,-1,0,1\}}$ and $\UR[1109]7\Zm$ are undecidable.
\end{theorem}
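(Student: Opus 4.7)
The plan is to reduce from the $\JSR$ problem of Theorem~\ref{t:fixedw}. Given a $\JSR$ instance $\Gamma \subset \matrices{d}{\Zm}$, I will build $\Gamma' \subset \matrices{2d+1}{\Zm}$ with the same number of matrices and no new entries besides $0$ and $-\infty$, such that $\urk(\Gamma') = 1$ if and only if $\rho(\Gamma) < 0$. Since the condition ``$\rho(\Gamma) \ge 0$'' is undecidable by Theorem~\ref{t:fixedw}, so is its negation, and the arithmetic $2 \cdot 554 + 1 = 1109$ recovers the stated bounds (with the alphabet sizes $3$ and $7$ untouched).

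The construction: for each $M \in \Gamma$, set
\[
M' = \begin{pmatrix} -\infty & M & -\infty \\ M & -\infty & -\infty \\ -\infty & -\infty & 0 \end{pmatrix} \in \matrices{2d+1}{\Zm},
\]
so that the first $2d$ states form a bipartite double cover of the graph of $M$ (edges only go between the two copies) and the last state $s$ is isolated with a $0$-weight self-loop. A straightforward block-by-block computation shows that for any product $N' = M'_{i_1} \cdots M'_{i_\ell}$, the entry $N'_{s,s}$ equals $0$, the rest of the $s$-row and $s$-column is $-\infty$, and the top-left $2d \times 2d$ block $\tilde N$ is $\bigl(\begin{smallmatrix} N & -\infty \\ -\infty & N \end{smallmatrix}\bigr)$ when $\ell$ is even and $\bigl(\begin{smallmatrix} -\infty & N \\ N & -\infty \end{smallmatrix}\bigr)$ when $\ell$ is odd, with $N = M_{i_1} \cdots M_{i_\ell}$.

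The key step is to show that $\urk(\tilde N) \ge 2$ whenever $N$ is not nilpotent. For even $\ell$, $\G(\tilde N)$ consists of two disjoint copies of $\G(N)$, so the critical graph has twice as many SCCs with the same individual cyclicities, giving $\urk(\tilde N) = 2\urk(N) \ge 2$. For odd $\ell$, $\G(\tilde N)$ is bipartite between the two copies so every cycle has even length; moreover $\rho(\tilde N) = \rho(N)$, because any closed walk of odd length in $\G(N)$ can be doubled to an even-length walk of the same mean. Hence every SCC of the critical graph of $\tilde N$ has cyclicity that is a multiple of $2$, and there is at least one such SCC as soon as $\rho(N) > -\infty$.

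Putting the pieces together, $\rho(N') = \max(\rho(\tilde N), 0) = \max(\rho(N), 0)$: if $\rho(\Gamma) \ge 0$, then every $N \in \semi\Gamma$ satisfies $\rho(N) \ge 0$ by Lemma~\ref{l:defrho}, so the critical graph of $N'$ contains that of $\tilde N$ and $\urk(N') \ge \urk(\tilde N) \ge 2$; whereas if $\rho(\Gamma) < 0$, one can choose $N$ with $\rho(N) < 0$, and then the critical graph of $N'$ reduces to the singleton $\{s\}$ with its length-$1$ self-loop, so $\urk(N') = 1$. The main delicate point I expect is the parity argument for $\urk(\tilde N) \ge 2$, in particular justifying $\rho(\tilde N) = \rho(N)$ in the odd case and verifying that every SCC of the critical graph inherits even cyclicity from the bipartite structure.
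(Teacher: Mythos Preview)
Your proof is correct, but it takes a more elaborate route than the paper. The paper uses the simpler block-diagonal construction
\[
\widehat M=\begin{pmatrix}M&-\infty&-\infty\\-\infty&M&-\infty\\-\infty&-\infty&0\end{pmatrix},
\]
so that every product $\widehat N$ is again block-diagonal with two disjoint copies of $N$ and the isolated $0$-loop. There is then no parity split: if $\rho(N)\ge0$ the critical graph of $\widehat N$ automatically contains two disjoint copies of $\G_c(N)$, giving $\urk(\widehat N)\ge2\urk(N)\ge2$, and if $\rho(N)<0$ only the $0$-loop survives.

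Your bipartite double cover achieves exactly the same size bounds ($2d+1$, same alphabet, same entry set $\cup\{0,-\infty\}$), but forces you to handle the odd-$\ell$ case separately and argue that bipartiteness pushes every SCC cyclicity to an even number. That argument is fine (and the step $\rho(\tilde N)=\rho(N)$ via doubling odd closed walks is correct), but it is extra work that the diagonal construction avoids entirely. One small point worth making explicit in your write-up: ``the critical graph of $N'$ contains that of $\tilde N$, hence $\urk(N')\ge\urk(\tilde N)$'' is not a general implication; it holds here only because the extra vertex $s$ is disconnected from the top-left block, so adjoining it can only add an isolated SCC.
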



\begin{proof}
From any matrix $M$, one can build:
\[\widehat M=\left[\begin{array}{ccc}M&\z&\z\\\z&M&\z\\\z&\z&0
\end{array}\right]~.\]
It is then clear that, for any finite family of matrices $\Gamma$, 
the semigroup generated by $\widehat\Gamma=\sett{\widehat M}{M\in\Gamma}$ is 
$\semi{\widehat\Gamma}=\sett{\widehat M}{M\in\semi\Gamma}$.

If $M$ has size $d$ and entries in $S$, then $\widehat M$ has size $2d+1$ and entries in $S\cup\{-\infty,0\}$. 
Moreover, if $\rho(M)<0$, then the critical graph of $\widehat M$ 
is simply the loop over the last vertex (last line of the matrix 
$\widehat M$), so that $\urk(\widehat M)=1$.
Otherwise, the critical graph of $\widehat M$ contains at least two copies 
of that of $M$ (which is nonempty), so that $\urk(\widehat M)\ge2$.
Thus, $\rho(M)\geq 0$ if and only if $\urk(\widehat M)\ge2$.
By reduction from the undecidable problems of Theorem~\ref{t:fixedw},
we can deduce that
$\UR3{\{-\infty,-1,0,1\}}$ and $\UR[1109]7\Zm$
are undecidable.
\end{proof}

\begin{remark}
As noted above, the joint spectral radius and ultimate rank are not altered through translation by a constant; 
thus uncomputability is preserved with other restrictions over the entries.
Regarding the joint spectral radius, the comparison
to $0$ may no longer be undecidable, but the comparison to some other constants
remains undecidable like, for example,
if $\Gamma\subset\M(\Nm)$, whether $\rho(\Gamma) \geq 1$.
\end{remark}

\subsection{Approximation of the joint spectral radius}

Still by using results in automata theory, 
we prove that even though the joint spectral radius is not computable 
in general, it is approximable 
and computable in restricted cases
in the following sense.

\begin{theorem}
\label{theorem:approx}
There is an algorithm that, given a finite set $\Gamma$ of matrices
 and $n\in\Ns$, computes a value $\alpha \in \Q \cup \{-\infty\}$ such that 
$\alpha-\frac1n\leq \rho(\Gamma) \leq \alpha+\frac1n$. 
\end{theorem}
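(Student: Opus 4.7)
The plan is to combine the approximate comparison of max-plus automata from~\cite{ColDav13} with a binary search on rationals to locate $\rho(\Gamma)$ to within an additive error of $1/n$. First I cast the problem in automaton terms: let $\A_\Gamma$ be the max-plus automaton over the alphabet $\Gamma$ whose transition map is $\mu(M)=M$ and all of whose states are both initial and final, so that by Lemma~\ref{l:negrad} the condition $\rho(\Gamma)\geq 0$ is equivalent to $\fonc{\A_\Gamma}\geq 0$. Using the identity $\rho(k\odot\Gamma)=\rho(\Gamma)+k$, the test ``$\rho(\Gamma)\geq k$'' for an integer $k$ becomes a comparison on the automaton built from $(-k)\odot\Gamma$. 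To obtain thresholds with denominator $n$, I replace $\Gamma$ by $\Gamma^{(n)}$, the set of length-$n$ products over $\Gamma$: since every product of $n\ell$ elements of $\Gamma$ factors into $\ell$ blocks of length $n$, one has $\rho(\Gamma^{(n)})=n\rho(\Gamma)$, so ``$\rho(\Gamma)\geq p/n$'' amounts to ``$\rho((-p)\odot\Gamma^{(n)})\geq 0$''.

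I then bracket $\rho(\Gamma)$ in an a priori interval. An upper bound is $U=\min_{M\in\Gamma}\norm[\infty]{M}$, coming from length-one products. For a lower bound, let $W$ be the largest absolute value of a finite entry in a matrix of $\Gamma$; every finite entry of $\mu(w)$ is a sum of at most $\length w$ finite entries of matrices in $\Gamma$, so either $\mu(w)$ is identically $-\infty$ or its maximal finite entry is at least $-W\length w$. Consequently, either $\rho(\Gamma)=-\infty$ or $\rho(\Gamma)\geq -W$. The alternative $\rho(\Gamma)=-\infty$ coincides with the mortality of $\semi\Gamma$, which is decidable by enumerating the finite monoid of boolean supports generated by $\Gamma$; if it holds, I return $\alpha=-\infty$.

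It then remains to binary-search for the largest integer $p\in\{-Wn,\ldots,Un\}$ such that $\rho(\Gamma)\geq p/n$, answering each test by invoking the approximate comparison of~\cite{ColDav13} on the automaton associated with $(-p)\odot\Gamma^{(n)}$; the output $\alpha$ is the midpoint of the final bracket and satisfies $|\alpha-\rho(\Gamma)|\leq 1/n$.

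The main obstacle will be to calibrate the precision of the approximate comparison oracle, whose two possible answers are of the form ``$\fonc{\B}\geq 0$'' and ``$\fonc{\B}(w)<-\length w/N$ for some $w$'', so that the additive slack $1/N$ stays strictly below the binary-search resolution $1/n$ across all queries and the two cases of the oracle translate unambiguously into updates of the search interval.
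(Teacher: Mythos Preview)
Your plan is workable in principle but takes a considerably more circuitous route than the paper. The paper's proof is essentially a one-line observation: since all states of $\A_\Gamma$ are initial and final, $\fonc{\A_\Gamma}(w)=\norm[\infty]{\mu(w)}$, and therefore
\[\rho(\Gamma)=\inf_{w\in\alphabet^+}\frac{\fonc{\A_\Gamma}(w)}{\length w}\]
by the very definition of the joint spectral radius. The result from~\cite{ColDav13}, as quoted in the paper, already provides an algorithm that approximates precisely this quantity $\inf_w\fonc{\A}(w)/\length w$ to within $1/n$, for automata with nonpositive weights. So one simply translates $\Gamma$ by minus its maximal entry $k$ to make all entries nonpositive, applies~\cite{ColDav13} once, and adds $k$ back.

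Your binary search, the passage to $\Gamma^{(n)}$, the a priori bracketing of $\rho(\Gamma)$, and the separate mortality test are therefore superfluous: you are reducing the approximation problem to many calls to an approximate comparison oracle, but that oracle is itself obtained in~\cite{ColDav13} by first approximating the very infimum you are after. There is also a genuine gap in your description of the oracle: the two answers ``$\fonc{\B}\geq 0$'' and ``$\exists w,\ \fonc{\B}(w)<-\length w/N$'' are not exhaustive (neither holds when $-1/N\le\inf_w\fonc{\B}(w)/\length w<0$), so the calibration obstacle you flag in your last paragraph is real, and resolving it cleanly would essentially force you back to the direct approximation statement anyway.
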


The proof uses the main result of \cite{ColDav13}. This result is originally
stated for min-plus automata using only positive weights. These automata are defined over
the \dfn{min-plus semiring} $(\Z \cup \{+\infty\}, \min, +, +\infty, 0)$. By using 
the morphism from the min-plus to the max-plus semiring that associates 
$k$ to $-k$, we can state the result of \cite{ColDav13} in the max-plus case.

\begin{proof}
First, let us exhibit an algorithm $\mathfrak A$ that gives an approximation of the joint spectral radius 
of any finite set of matrices 
with only nonpositive entries.
Consider a finite set of matrices $\Gamma$
 with only nonpositive entries, and a
max-plus automaton $\A$ such that $\Gamma=\Gamma_\A$. 
From \cite{ColDav13}, there is an algorithm $\mathfrak A$ that, given a max-plus automaton $\A$ over an alphabet
$\alphabet$ using only nonpositive weights
and $n\in\Ns$, computes a value $\alpha \in \Q \cup \{+\infty\}$ such that:
$\alpha-\frac1n \leq \inf_{w\in \alphabet^+}\frac{\fonc{\A}(w)}{|w|}
\leq \alpha + \frac1n$.
$\mathfrak A$ also gives an approximation of the joint spectral radius
of $\Gamma$, since: 
\begin{align*}
\rho(\Gamma)&=&&\inf_{\ell > 0}
\sett{\frac1\ell \norm[\infty]{M_1\dotsm M_\ell}}{M_1, \ldots, M_\ell 
\in \Gamma} \\
&= && \inf_{\ell > 0}
\sett{\frac1\ell \fonc{\A}(w)}{w\in \alphabet^\ell} \\
&= && \inf_{w\in \alphabet^+}\frac{\fonc{\A}(w)}{|w|}
\end{align*}

Consider now a finite set of matrices $\Gamma$ with arbitrary entries.
Let $k$ denote the greatest entry that appears in at least one of the matrices 
of $\Gamma$.
Construct the set $\Gamma'=-k\odot\Gamma$. The set $\Gamma'$ is then a finite set of matrices 
with only nonpositive
entries, on which we can apply $\mathfrak A$. We then get an 
approximation of the joint spectral radius of $\Gamma$ by adding $k$ to the 
value given by the algorithm.
\end{proof}
This implies, in particular, that the joint spectral radius of every finite set of matrices is a computable real number.

\paragraph*{Remarks about the complexity}
The algorithm of~\cite{ColDav13} is EXPSPACE in the size of the automaton 
and in $n$. 
Moreover the problem is PSPACE-hard by reduction from the universality problem 
of a nondeterministic automaton:
\textit{Given a nondeterministic finite automaton $\A$ over 
a $2$-letter alphabet $\alphabet$, the problem to determine whether 
the language accepted by $\A$ is $\alphabet^+$ is PSPACE-complete.}
A precise statement of the reduction is given in the following lemma.

\begin{lemma}
\label{lemma:pspace-hard}
Given a nondeterministic finite automaton $\A$ over a $2$-letter alphabet $\alphabet$,
one can construct in polynomial time a set of $3$ matrices $\Gamma$ with entries in 
$\{-\infty, 0\}$ such that $\A$ accepts $\alphabet^+$ if and only if the joint spectral
radius of $\Gamma$ is equal to $0$. Otherwise, the joint spectral radius of 
$\Gamma$ is equal to $-\infty$.
\end{lemma}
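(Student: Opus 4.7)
My plan is to encode the NFA $\A$ as the letter matrices of a max-plus automaton and to add a single ``reset'' matrix whose role is to force initial-to-final traversals. Because products of matrices with entries in $\{-\infty,0\}$ remain in $\{-\infty,0\}$, every product $M$ satisfies $\norm[\infty]{M}\in\{-\infty,0\}$, so $\rho(\Gamma)\in\{-\infty,0\}$. The task therefore reduces to showing that the semigroup generated by $\Gamma$ is mortal (contains the all-$-\infty$ matrix) if and only if $\A$ is \emph{not} universal.

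Concretely, I first enlarge $\A$ by one fresh isolated state $q_0$, declared both initial and final and carrying no transition on $a$ or $b$; this preserves the language on $\alphabet^+$ but guarantees $I\cap F\ne\emptyset$. Writing $I$ and $F$ for the initial and final vectors of this enlarged NFA and $M_a, M_b\in\matrices{d+1}{\Zm}$ for its letter matrices (entries $0$ on transitions, $-\infty$ elsewhere), I set $\Gamma=\{M_a,M_b,R\}$ with $R=F\cdot I$, so that $R_{i,j}=0$ iff $i\in F$ and $j\in I$. Two identities will drive the argument: $R^2=R$, holding because $q_0\in I\cap F$ makes the scalar $IF$ equal to $0$; and, for every $w\in\alphabet^+$,
\[RM_wR = (IM_wF)\odot R,\]
where $IM_wF$ equals $0$ when $w$ is accepted by $\A$ and $-\infty$ otherwise.

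The two directions then proceed as follows. If $\A$ is not universal, pick an unaccepted $w\in\alphabet^+$: the displayed identity forces $RM_wR$ to be the all-$-\infty$ matrix, so $\Gamma$ is mortal and $\rho(\Gamma)=-\infty$. Conversely, if $\A$ is universal, I claim every product in $\Gamma^+$ is nonzero. My strategy is a case analysis after grouping along the occurrences of $R$: any maximal infix of the form $RWR$ with $W\in\{M_a,M_b\}^*$ collapses to $R$ (via $R^2=R$ when $W$ is empty, or via the displayed identity together with universality when $W$ codes an accepted nonempty word), so an arbitrary product reduces to one of the five shapes $M_w$, $R$, $M_{w_0}R$, $RM_{w_k}$, or $M_{w_0}RM_{w_k}$. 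For each of these I exhibit a $0$ entry: $M_w$ has one because the accepting run of $\A$ on $w$ is a path; the matrix $R$ has one at $(q_0,q_0)$; and a product such as $M_{w_0}RM_{w_k}$ decomposes into entries $(M_{w_0}F)_i+(IM_{w_k})_j$, which attain $0$ when $i$ and $j$ are supplied by accepting runs on $w_0$ and $w_k$ respectively.

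The main delicate point I anticipate is this last case analysis, which must ensure that no awkward product---involving, for instance, many consecutive $R$'s or complicated alternations with letter matrices---accidentally collapses to the all-$-\infty$ matrix. This is exactly where the addition of the isolated state $q_0$ is crucial, since without $I\cap F\ne\emptyset$ the identity $R^2=R$ would fail and $\Gamma$ could be mortal even when $\A$ is universal. Once this is settled, the construction is manifestly polynomial in the size of $\A$, so the PSPACE-hardness of computing $\rho$ will follow from the classical PSPACE-completeness of NFA universality.
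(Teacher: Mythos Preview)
Your proof is correct, and your construction is in fact identical to the paper's: the reset matrix $R=F\cdot I$ (after adjoining the isolated initial-and-final state $q_0$) is precisely the matrix $\mu'(\star)$ produced by Lemma~\ref{lemma:if}, and your $q_0$ is the extra state~$q$ added there. The only difference is in presentation: the paper obtains the equivalence by invoking Lemmas~\ref{lemma:if} and~\ref{l:negrad} as black boxes, whereas you unwind those lemmas directly via the idempotence $R^2=R$ and the identity $RM_wR=(IM_wF)\odot R$, together with an explicit mortality case analysis. Your route is more self-contained but amounts to reproving, in the $\{-\infty,0\}$-weighted special case, exactly what those two lemmas already supply.
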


\begin{proof}
Consider a nondeterministic finite automaton $\A$ over 
a $2$-letter alphabet $\alphabet$. We construct a max-plus automaton $\A'$
from $\A$ by weighting the transitions by $0$. Then, $\A$ accepts $\alphabet^+$ if and only if
$\fonc{\A'} = 0$ (otherwise there is a word $w$ such that 
$\fonc{\A'}(w) = -\infty$).
By Lemma~\ref{lemma:if}, one can construct a max-plus automaton $\B$ over a $3$-letter alphabet such that 
every state of $\B$ is both initial and final, $\B$ has only weight $0$ on
its transitions, and
$\fonc{\A'} \geq 0$ if and only if $\fonc{\B} \geq 0$. 
Hence, $\fonc{\A'} = 0$ if and only if $\fonc{\B} = 0$.
By Lemma~\ref{l:negrad}, $\fonc{\B} \geq 0$ if and only if the joint spectral radius
of $\Gamma_\B$ is nonnegative. Since, $\Gamma_\B$ contains only matrices with entries
in $\{0,-\infty\}$, it implies that $\fonc{\B} = 0$ if and only if the joint 
spectral radius of $\Gamma_\B$ is equal to $0$. 
All the constructions are polynomial.
\end{proof}

Notice that Lemma~\ref{lemma:pspace-hard} also proves that $\JSR3{\{0,-\infty\}}$
is PSPACE-hard.
A result in~\cite{Almagor2011} implies that $\JSR k{\Z^{-} \cup \{-\infty\}}$ is also
PSPACE, where $\Z^{-}$ denotes the set of nonpositive integers. Hence, 
Problem $\JSR3{\{0,-\infty\}}$ is also PSPACE-complete.

\subsection{Restriction to finite entries}

Let us consider the restriction to matrices with only finite entries.
In terms of automata, it means that for all letters $a$, 
there is a transition labelled by $a$
between any pair of states. In this case, the joint spectral radius and 
ultimate rank are computable.

\begin{theorem}
\label{t:FiniteEntries}
There are PSPACE algorithms to compute the joint spectral radius and 
the ultimate rank of any finite set of matrices with finite entries. In particular,
in this case, the joint spectral radius is a rational number.
Moreover, $\JSR3{\{0,-1\}}$ and $\UR3{\{0,-1\}}$ are PSPACE-complete.
\end{theorem}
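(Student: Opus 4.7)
The simplifying structural fact to exploit is that, when $\Gamma\subseteq\matrices d\Z$ has only finite entries, the corresponding max-plus automaton $\A$ with $\Gamma_\A=\Gamma$ (all states both initial and final) is \emph{complete}: every letter provides a transition between every pair of states. By Lemmas~\ref{l:defrho} and~\ref{l:negrad},
\[\rho(\Gamma)=\inf_{w\in\alphabet^+}\frac{\rho(\mu(w))}{\length w}~,\]
where each $\rho(\mu(w))$ is the maximum mean cycle of $\mu(w)$'s graph, computable in polynomial time by Karp's algorithm and rational with denominator at most $d$. I would then encode the decision problem ``$\rho(\Gamma)\le t$'' as a two-player game on a polynomial-size arena built from $\A$---Min choosing letters, Max choosing intermediate states and endpoints---whose value lies in a rational lattice of polynomial denominator and can be solved in PSPACE via $\mathrm{NP}\cap\mathrm{coNP}$-type algorithms for mean-payoff-style games. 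Completeness of $\A$ ensures all moves are legal and is what rules out the undecidability obstructions of the general setting. Rationality of $\rho(\Gamma)$ then follows from positional determinacy.

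For $\urk(\Gamma)$, I would first compute $\rho(\Gamma)$, then nondeterministically guess a word $w$ realising the infimum, materialising $M=\mu(w)$ in polynomial space (since its entries are bounded by $\length w\cdot K$ for $K$ the largest magnitude in $\Gamma$, and $\length w$ is at most exponential, so polynomially many bits suffice). Then $\urk(M)$ is extracted as the sum of cyclicities of the strongly connected components of $M$'s critical graph, which is a polynomial-time computation on a polynomial-space representation of $M$; minimising over the nondeterministic branches yields $\urk(\Gamma)$ in PSPACE.

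For PSPACE-hardness, I would adapt Lemma~\ref{lemma:pspace-hard}. Given an NFA $\A_0$ on a binary alphabet, construct a three-letter max-plus automaton by weighting every transition of $\A_0$ with $0$ and adding, for every pair of states and every letter, a ``default'' transition of weight $-1$ in place of every missing transition, so that the resulting $\Gamma$ has only entries in $\{0,-1\}$. Then $\A_0$ is universal iff for every word there is a run of weight $0$ (i.e., using only original transitions) iff $\rho(\Gamma)=0$, while otherwise $\rho(\Gamma)\le -1/d$. This yields PSPACE-hardness of $\JSR3{\{0,-1\}}$. For $\UR3{\{0,-1\}}$, I would compose with a finite-entry variant of the $\widehat{\cdot}$-construction from the proof of Theorem~\ref{t:ur}, replacing the $-\infty$ off-diagonal blocks by a sufficiently negative finite constant $-N$ (polynomial in the input), chosen so that the critical-graph dichotomy $\urk(\widehat M)=1$ vs.\ $\urk(\widehat M)\ge2$ is preserved.

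\textbf{The main obstacle} is the precise game-theoretic characterisation of $\rho(\Gamma)$ in the word-commitment semantics. The most natural reduction---to a standard alternating mean-payoff game where Min picks the current letter and Max picks the next state---yields a strictly smaller value, since in the alternating setting Min may adapt her letter choices to Max's state moves, which is more powerful than pre-committing to a word. Recovering the correct value requires enriching the game (e.g.\ tracking target endpoints so that Max's full-word observation is simulated) or, alternatively, a pumping argument in the projective tropical matrix space bounding the length of an optimal witness word, and in either case verifying that the approach stays within PSPACE. A secondary delicate point is the $-N$ padding in the $\widehat{\cdot}$-construction: $N$ must be chosen large enough that the new cycles introduced by the padding cannot become critical, but small enough to remain polynomial in the input.
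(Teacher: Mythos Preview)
Your game-theoretic route to the PSPACE upper bound is not a proof: you yourself identify that the alternating mean-payoff game gives a strictly smaller value than $\rho(\Gamma)$, and neither proposed repair (enriching the arena, or pumping) is carried out. The paper takes precisely your ``alternative'' route---pumping in the projective tropical matrix space---and the key lemma you are missing is Lemma~\ref{lemma:gaubert} (Gaubert): for $\Gamma\subset\matrices d{\{-b,\ldots,b\}}$, every product $M\in\semi\Gamma$ satisfies $M_{i,j}-M_{1,1}\in\{-2b,\ldots,2b\}$, so the projective semigroup $\{-M_{1,1}\odot M:M\in\semi\Gamma\}$ has at most $(4b+1)^{d^2-1}$ elements. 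This immediately gives the exponential bound on witness length you assert without proof for $\urk$, and an orbit argument on the all-zero vector sharpens it to $(4b+1)^d$ for $\rho$, yielding \[\rho(\Gamma)=\min_{\ell\le(4b+1)^d}\min_{M_1,\ldots,M_\ell\in\Gamma}\frac1\ell\rho(M_1\cdots M_\ell)~.\] Nondeterministically multiplying up to this length and computing $\rho$ or $\urk$ of the running product is then NPSPACE$=$PSPACE; rationality of $\rho(\Gamma)$ follows since each $\frac1\ell\rho(M_1\cdots M_\ell)$ is rational.

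Your hardness reduction for $\JSR3{\{0,-1\}}$ is wrong as stated. Filling missing NFA transitions with weight $-1$ yields matrices $\Gamma$ with $\rho(\Gamma)=\inf_w\frac{\norm[\infty]{\mu(w)}}{\length w}$, and $\norm[\infty]{\mu(w)}=0$ whenever \emph{any} path (between \emph{any} pair of states) on $w$ uses only original edges---e.g.\ a single self-loop on a non-initial, non-final state forces $\rho(\Gamma)=0$ regardless of universality. You must first pass through Lemma~\ref{lemma:if} to make all states initial and final (this is where your unexplained third letter comes from), and only then replace $-\infty$ by $-1$; this is exactly the paper's two-step reduction $\JSR3{\{0,-\infty\}}\to\JSR3{\{0,-1\}}$. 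For $\UR3{\{0,-1\}}$ you over-engineer: once $M\in\matrices d{\{0,-1\}}$, the block matrix $\widetilde M=\left[\begin{smallmatrix}M&(-1)\\(-1)&0\end{smallmatrix}\right]$ already has entries in $\{0,-1\}$, and any cycle through the new vertex has mean $\le-\frac2{d+1}<0$, so no large $-N$ is needed.
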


The fact that the problems are PSPACE-hard comes from Lemma~\ref{lemma:pspace-hard},
that proves that $\JSR3{\{0,-\infty\}}$ is PSPACE-hard and by
successive reductions from $\JSR3{\{0,-\infty\}}$ to $\JSR3{\{0,-1\}}$
and from $\JSR3{\{0,-1\}}$ to $\UR3{\{0,-1\}}$. 

As for proving that the problems are PSPACE,
the key of the reasoning is the following lemma:

\begin{lemma}[\cite{Gaubert96Burnside}]
\label{lemma:gaubert}
Let $\Gamma\subset\matrices{d}{\{-b,\ldots,b\}}$ for some nonnegative integers $b$ and $d$.
Then for all matrices $M\in\semi\Gamma$ and all indices $i,j$,
the quantity $M_{i,j}-M_{1,1}$ belongs to $\{-2b,\ldots,2b\}$.
\end{lemma}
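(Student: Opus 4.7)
The plan is to exploit the max-plus product formula to show that entries within the same row, and within the same column, of any product $M = A_1 \cdots A_\ell$ of matrices from $\Gamma$ differ by at most $2b$, uniformly in the length $\ell$. From this, the bound on $M_{i,j} - M_{1,1}$ follows by combining a row step and a column step.

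First I would dispose of the trivial case $M \in \Gamma$, where all entries lie in $\{-b, \ldots, b\}$ so the claim is immediate. For a general product $M = A_1 \cdots A_\ell$ of length $\ell \geq 2$, I would decompose $M = P \cdot A_\ell$ with $P = A_1 \cdots A_{\ell-1}$ and use the identity $M_{i,j} = \max_k (P_{i,k} + (A_\ell)_{k,j})$. Picking the maximizer $k^*$ for the pair $(i,j)$, the bound $M_{i,j'} \geq P_{i,k^*} + (A_\ell)_{k^*, j'}$ gives
\[M_{i,j} - M_{i,j'} \leq (A_\ell)_{k^*, j} - (A_\ell)_{k^*, j'} \leq 2b,\]
because both entries of $A_\ell$ lie in $\{-b, \ldots, b\}$; swapping $j$ and $j'$ yields $|M_{i,j} - M_{i,j'}| \leq 2b$ for any two entries in the same row. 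The symmetric decomposition $M = A_1 \cdot Q$ with $Q = A_2 \cdots A_\ell$ gives the analogous column bound $|M_{i,j} - M_{i',j}| \leq 2b$.

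Then I would conclude by writing $M_{i,j} - M_{1,1} = (M_{i,j} - M_{1,j}) + (M_{1,j} - M_{1,1})$, where the first summand is a same-column difference and the second a same-row difference, each in $\{-2b, \ldots, 2b\}$. The main obstacle will be sharpening this triangle-style combination, which naively yields only the weaker constant $4b$, down to the precise $2b$ stated in the lemma. I would expect this to require a single path-modification argument that reroutes an optimal $i \to j$ path of $M$ to a $1 \to 1$ path through a carefully chosen intermediate vertex, so that the adjustments at the initial and final edges partially cancel rather than stack. The qualitative content needed for the application, namely that $\semi\Gamma$ projects to a finite set modulo scalar translation, is however already clear from the row-and-column estimate, and is what drives the PSPACE algorithm.
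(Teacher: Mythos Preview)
The paper does not prove this lemma at all: it is quoted from \cite{Gaubert96Burnside} and used as a black box. So there is no in-paper argument to compare your approach to.

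Your row-and-column estimate is correct and gives $\lvert M_{i,j}-M_{i,j'}\rvert\le 2b$ and $\lvert M_{i,j}-M_{i',j}\rvert\le 2b$ for any product of length $\ge 2$, and hence $\lvert M_{i,j}-M_{1,1}\rvert\le 4b$ via the triangle step. The obstacle you flag---sharpening $4b$ down to $2b$---is not a gap in your argument but an actual error in the constant stated in the lemma. Take $d=2$, $b=1$,
\[
A_1=\begin{pmatrix}1&1\\-1&-1\end{pmatrix},\qquad
A_2=\begin{pmatrix}1&-1\\1&-1\end{pmatrix},
\]
both with entries in $\{-1,1\}$. The tropical product $M=A_1A_2$ has $M_{1,1}=2$ and $M_{2,2}=-2$, so $M_{2,2}-M_{1,1}=-4\notin\{-2,\ldots,2\}$. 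No path-rerouting trick can repair this; the sharp constant really is $4b$, which is exactly what your argument proves.

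You are also right that this does not matter for the application. The proof of Theorem~\ref{t:FiniteEntries} only uses Lemma~\ref{lemma:gaubert} to bound the cardinality of the projective image of $\semi\Gamma$ (and of the orbit of the zero vector) by a quantity of the form $(cb+1)^{d^2}$; replacing $2b$ by $4b$ changes $c$ from $4$ to $8$, which is still simply exponential in the input size and leaves the PSPACE conclusion intact. So your proposal is essentially a complete and correct proof of the lemma with the right constant, together with a correct diagnosis that the stated constant is too optimistic.
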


\begin{proof}[Proof of Theorem \ref{t:FiniteEntries}]
{
Consider a finite set $\Gamma$ of matrices which have only entries
in $\{-b,\ldots,b\}$. 
By Lemma~\ref{lemma:gaubert}, the set $\Lambda = \sett{-M_{1,1}\odot M}{M \in \semi{\Gamma}}$ contains
at most $(4b+1)^{d^2-1}$ matrices.

Moreover, since the operation of adding the same constant to all the entries of a matrix
commutes with the product of matrices, $\Lambda$ is 
the set of matrices $-M_{1,1}\odot M$ such that $M$ is a product of at most 
$(4b+1)^{d^2-1}$ matrices of $\Gamma$.
Finally, 
the ultimate rank of $\Gamma$ is minimum of
the ultimate rank of the matrices in $\Lambda$, which can be computed by the following algorithm in NPSPACE.
Start with a matrix~$M=M_1\in\Gamma$ and a counter~$\ell$ with value~$1$.
At each (nondeterministic) step, either compute~$\urk(M)$ and stop, or increase~$\ell$ by one and multiply~$M$ by some matrix of~$M_\ell\in\Gamma$.
If $\ell=(4b+1)^{d^2-1}$, then compute~$\urk(M)$ and stop. 

Since the maximum value of~$\ell$ is simply exponential in the size
$|\Gamma|d^2\log(b)$ of the input, both $\ell$ and the size of the entries of $M=M_1\cdots M_\ell$
are simply exponential and thus can be stored in polynomial space.
Since the product of matrices and ultimate rank of one matrix can be computed in~P
the algorithm is in~NPSPACE=PSPACE.

Concerning the joint spectral radius, let us prove that 
\begin{equation}\label{e:bndedPdts}
\rho(\Gamma)=\min_{\substack{\ell \leq (4b+1)^d \\ M_1,\ldots, M_\ell \in \Gamma}} \left\{\frac1\ell \rho(M_1\cdots M_\ell)\right\}
\end{equation}
and conclude in the same way.
To prove~\eqref{e:bndedPdts}, consider a product $M_1\cdots M_\ell$ of matrices in~$\Gamma$ 
and the orbit of the vector with all entries equal to~$0$
under the action of~$M_1, M_2, \ldots, M_\ell, M_1, M_2, \ldots$. 
By Lemma~\ref{lemma:gaubert},
this orbit projectively has size at most $(4b+1)^{d}$.
Hence, it cycles after $t$ steps for some $t \leq (4b+1)^{d}$ 
and has a period $p \leq (4b+1)^{d}$.
Each time the orbit goes back to the same vector projectively,
all coordinates have increased by some value,
which is the spectral radius of 
$M_{(t+1)\bmod \ell} M_{(t+2) \bmod \ell} \cdots M_{(t+p) \bmod \ell}$. 
Indeed, 
for matrices with only finite entries, the spectral radius is
the only eigenvalue.
Finally, we get 
$\frac1\ell\rho(M_1\cdots M_\ell)=\frac{1}{p}\rho(M_{(t+1)\bmod \ell} 
M_{(t+2) \bmod \ell} \cdots M_{(t+p) \bmod \ell})$.
}
\paragraph*{PSPACE-hardness}
Let $\Gamma$ be a finite set of matrices with entries in $\{0,-\infty\}$. Let $\Gamma'$
be the set $\Gamma$ where every entry with value $-\infty$ has been replaced by $-1$.
The joint spectral radius of $\Gamma$ is equal to $0$ if and only if the joint spectral
radius of $\Gamma'$ is equal to $0$. Otherwise, $\rho(\Gamma) = -\infty$ and 
$\rho(\Gamma')$ is strictly negative. Thus, $\JSR3{\{0,-1\}}$ is PSPACE-hard.

Now, let us reduce $\JSR3{[\{0,-1\}}$ to $\UR3{\{0,-1\}}$.
From any matrix $M\in\matrices d{\{0,-1\}}$, with $d\in\Ns$, one can build the matrix:
\[\widetilde M=\left[\begin{array}{cc}M&(-1)\\(-1)&0
\end{array}\right]\in~\matrices{d+1}{\{0,-1\}}~,\]
where $(-1)$ is the vector with appropriate size whose entries are all~$-1$.

It is then clear that, for any finite family of matrices $\Gamma$, the semigroup generated by $\widetilde\Gamma=\sett{\widetilde M}{M\in\Gamma}$ is $\semi{\widetilde\Gamma}=\sett{\widetilde M}{M\in\semi\Gamma}$.

Moreover, note that if $\rho(M)<0$, then the critical graph of $\widetilde M$ is the loop over the last vertex, so that $\urk(\widetilde M)=1$.
Otherwise, $\rho(M)=0$, and the critical graph is the union of this loop and the critical graph of $M$, so that $\urk(\widetilde{M})=1+\urk(M)$.
We deduce that the ultimate rank of $\widetilde{\Gamma}$ is greater than or equal to $2$
if and only if $\rho(\Gamma) \geq 0$, which we already know to be PSPACE-hard.
\end{proof}

{
\begin{remark}
Lemma~\ref{lemma:gaubert} is implicitely used in~\cite[Corollary~2]{Gaub95} to prove that the functions computed by max-plus automata with rational entries
whose linear representation generates a so-called primitive semigroup (which includes matrices with finite entries)
can be computed by a deterministic automaton.

It is also shown (as Corollary~4) that the minimal growth rate of a deterministic automaton,
\ie the joint spectral radius of its linear representation, can be computed
as the spectral radius of one matrix whose indices are the states of the automaton.

This gives another algorithm to compute the joint spectral radius of a finite set of matrices with finite integers, but not a PSPACE one, since
the size of the matrix is only bounded by~$(4b+1)^{d^2}$, while Equation~\eqref{e:bndedPdts} allows to compute only matrices of size~$d$
without storing them.
\end{remark}
}

%
%

\section{Encoding of two-counter machines into weighted automata}
\label{a:comparaison}

In this section, we give the complete proof of Theorem~\ref{theorem:main}.

\paragraph*{Two-counter machines}

Several variants of two-counter machines exist, all equivalent in terms of 
expressiveness. We use here the one described in \cite{Ivanov2014}:
A \dfn{two-counter machine} is a deterministic finite state machine with 
two counters that can be incremented or decremented if not valued to $0$. 
More precisely, it is given by a tuple 
$(Q,T^+_1,T^+_2,T^-_1,T^-_2,q_{init},q_{halt})$
where $Q$ is a finite set of states, $T^+_1$\resp{$T^+_2$} is a finite set
in $Q^2$ which represents the transitions that increment the 
first\resp{second} counter, $T^-_1$\resp{$T^-_2$} is a finite set
in $Q^3$ which represents the transitions that check if the 
first\resp{second} counter is valued to $0$ and decrement it if not,
$q_{init} \in Q$ is the initial state and 
$q_{halt} \in Q$ is a final state
such that there is no outgoing transition from $q_{halt}$ 
(for all transitions $(q,p) \in T_1^+ \cup T_2^+$ or 
$(q, p, p') \in T^-_1\cup T^-_2$,
$q\neq q_{halt}$).
Moreover the machine is deterministic:
in one state there is at most one action that can be performed,
\ie for all $q\in Q$, there is at most one transition of the form 
$(q,p)$ or $(q,p,r)$ in $T^+_1 \cup T^+_2 \cup T^-_1 \cup T^-_2$ 
and $T^+_1 \cap T^+_2 = \emptyset$ and $T^-_1 \cap T^-_2 = \emptyset$.

The semantics of a two-counter machine is given by means of the 
valuations of the counters
that are pairs of nonnegative integers. 
An execution with counters initialised to $(n_1^{0},n_2^{0})$
is a sequence of transitions and valuations denoted by:
$$(n_1^{0},n_2^{0}) \xrightarrow{t_1} (n_1^{1},n_2^{1}) 
\xrightarrow{t_2} (n_1^{2},n_2^{2}) \ldots 
\xrightarrow{t_k} (n_1^{k},n_2^{k})$$
such that: 
\begin{itemize}
\item for all $i \in \{1,\ldots, k\}$, 
if $t_i \in T^+_1$\resp{$T^+_2$} then $n_1^{i} = n_1^{i-1}+1$ and 
$n_2^{i} = n_2^{i-1}$\resp{$n_1^{i} = n_1^{i-1}$ and 
$n_2^{i} = n_2^{i-1}+1$},
\item for all $i \in \{1,\ldots, k\}$, 
if $t_i \in T^-_1$\resp{$T^-_2$} then $n_1^{i}=n_1^{i-1}=0$
or $n_1^{i} = n_1^{i-1}-1$ and 
$n_2^{i} = n_2^{i-1}$\resp{$n_1^{i}=n_1^{i-1}$
and $n_2^{i} = n_2^{i-1}=0$ or $n_2^{i} = n_2^{i-1}-1$},
\item for all $i \in \{1,\ldots, k-1\}$, 
if $t_i=(p_i,q_i)\in T^+_1\cup T^+_2$ then 
$t_{i+1} \in \{q_i\}\times (Q \cup Q^2)$,
\item for all $i \in \{1,\ldots, k-1\}$, 
if $t_i=(p_i,q_i,q'_i)\in T^-_1\cup T^-_2$ and $n_{i-1}=0$ then 
$t_{i+1} \in \{q_i\} \times (Q\cup Q^2)$, otherwise if $n_{i-1} \neq 0$ then 
$t_{i+1} \in \{q'_i\} \times (Q\cup Q^2)$.
\end{itemize}

The machine halts with counters initialised to $(n,m)$ 
if and only if the execution with counters initialised to $(n,m)$ ends in 
$q_{halt}$.

The halting problem for two-counter machines when counters are initialised to $(0,0)$ 
is undecidable. Moreover, like for Turing machines, there exists a, specific, so-called \dfn{universal} two-counter machine (U2CM), that is able 
to simulate the behaviour of any two-counter machine: there is an encoding $\delta$ from the 
two-counter machines to the positive integers such that the U2CM
halts with counters
initialised to $(\delta(\M),0)$ if and only if $\M$ halts with counters
initialised to $(0,0)$.
Therefore, 
 the problem to determine, given a
positive integer $n$, whether this particular U2CM halts
when counters are initialised to $(n,0)$, is undecidable.

In \cite{Ivanov2014}, it is proved that there exists such a machine with $268$ states. As far as we know, it is the best known bound. 


We prove the following result:

\begin{lemma}
\label{lemma:transfoweights}
Given a two-counter machine $\mathcal{M}$ with $d$ states 
and a non-negative integer $n$, 
one can build a max-plus automaton $\A$ on a $6$-letter alphabet
with weights in $\{-n-1,-2,-1,0,1,n-1\}$ and $2d+27$ states such that 
$\mathcal{M}$ halts with counters initialised to $(n,0)$
if and only if there is a word $w$ such that $\fonc{\A}(w) < 0$.
\end{lemma}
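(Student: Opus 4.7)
The plan is to extend the reduction from the halting problem of two-counter machines sketched by Colcombet \cite{Colcombet,Almagor2011}, which handles executions where both counters start at~$0$, so as to deal with an arbitrary initial value~$n$ for the first counter, while keeping tight control on the number of states, the alphabet size, and the set of weights used. The high-level target is: a word $w$ encodes a (valid) halting execution of $\M$ starting from $(n,0)$ if and only if every accepting run on $w$ has strictly negative weight.

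First, I would recall Colcombet's core idea. Over a fixed alphabet whose letters name the transitions of $\M$ plus an end marker, a candidate halting execution is encoded by a word. The automaton is assembled in parallel from three kinds of branches: a \emph{control} branch that reads the word letter by letter and checks that the sequence respects the transition relation of $\M$ and reaches $q_{halt}$, using about $2d$ states (doubling $d$ allows one to separate the two possible successor states of test-and-decrement transitions); and two symmetric \emph{counter checkers}, each a constant-size gadget that nondeterministically picks a moment of the execution, tracks increments and decrements of the corresponding counter from that moment on, and accrues a weight that detects whether the word cheats (performing a decrement while the counter should be~$0$). The weights are tuned so that the control branch of a halting execution of length~$\ell$ yields weight~$-\ell$, while the counter branches are always strictly smaller when no cheating occurs; max-plus semantics then forces $\fonc{\A}(w)<0$ exactly on words that code valid halting executions.

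Second, I would introduce an initialization widget for $n$. Using a new dedicated letter read once at the start, I fire a single transition of weight $n-1$ on the counter~1 checker (pre-loading that checker as if counter~1 had already been incremented $n$ times) and a compensating transition of weight $-n-1$ on the control branch so that the global accounting is preserved. This confines the $n$-dependent weights to exactly two edges; everywhere else the weights remain in $\{-2,-1,0,1\}$, which matches the announced weight set $\{-n-1,-2,-1,0,1,n-1\}$. I would then tally the resources: the $2d$ control states, plus $27$ auxiliary states that accommodate the two counter gadgets, the initialization widget, and the start/end gadgets, and a $6$-letter alphabet consisting of the four letters for increment and test-and-decrement of each counter, the end marker, and the initialization letter.

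The main obstacle is this engineering step: pinning down exactly the numbers $2d+27$ and $6$, matching the prescribed weight set, and preserving the soundness and completeness of Colcombet's analysis after insertion of the initialization widget. In particular, I must verify that the large positive weight $n-1$ on the counter~1 widget is truly absorbed by the counter semantics (so that it acts as $n$ preloaded increments) and that the large negative weight $-n-1$ on the control branch cannot be exploited by a spurious run that later branches into a counter checker to reach a nonnegative total. Ruling this out requires a careful case analysis on runs that take or avoid each of the two exceptional transitions, replicating Colcombet's invariants on the remainder of the run; once this is done, the equivalence between halting of~$\M$ on $(n,0)$ and existence of a word of negative value follows.
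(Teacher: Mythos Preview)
Your outline departs from the paper's construction in a way that breaks the $6$-letter bound, and this is a genuine gap rather than a bookkeeping issue. You encode an execution as a bare sequence of action symbols (``the four letters for increment and test-and-decrement of each counter''), with no explicit record of counter values. But a test-and-decrement $c_i^-$ has two possible outcomes (counter zero vs.\ nonzero), and the successor state of~$\M$ depends on which one occurs. With a single letter $c_i^-$ the word does not say which branch was taken, so the control component cannot follow the run of~$\M$ deterministically, and---more fatally---your counter checkers, which only see the same action letters, have no way to verify that the branch actually chosen by the control component is consistent with the counter value they are tracking. Fixing this by splitting each decrement letter into a ``zero'' and a ``nonzero'' variant already costs $6$ action letters, leaving no room for your end marker and initialization letter. (Your earlier phrase ``letters name the transitions of~$\M$'' would resolve the ambiguity but makes the alphabet size depend on~$\M$.)

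The paper avoids this by a different encoding: the word carries the counter values \emph{in unary}, using two letters $a,b$, so that a configuration with counters $(p,q)$ is written $a^pb^q$ and an execution is a word in $a^*(\Omega a^*b^*)^*\Omega$ with $\Omega=\{c_1^+,c_2^+,c_1^-,c_2^-\}$. This gives exactly $6$ letters, and the branch of a decrement is visible in the word as whether the relevant unary block is empty. The checking gadgets then compare \emph{lengths of adjacent unary blocks} (e.g.\ for $\gamma a^pb^qc_1^+a^{p'}b^{q'}\gamma'$ they compute $\max(p-p',\,p'-p-2)$), so a valid word gets value exactly $-1$, not $-\ell$. The initialization to $(n,0)$ is handled not by a preloading letter but by an extra two-state gadget that on a prefix $a^m\Omega$ outputs $\max(m-n-1,\,n-1-m)=|m-n|-1$; this is where the weights $n-1$ and $-n-1$ enter, on $\Omega$-labelled edges. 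The state count $2d+27$ and the weight set $\{-n-1,-2,-1,0,1,n-1\}$ are specific to this unary construction, and your alternative architecture would not land on the same numbers even if the branching issue were repaired.
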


The main difference with the previous proof of~\cite{Colcombet} comes from the fact that we encode a universal two-counter machine starting with any integer $n$ in the first counter, 
when the previous ones encode any two-counter machine starting with null counters. 
We thus need to ensure that the first counter starts with n (point 4 in the proof of Lemma~\ref{lemma:transfoweights}). 
So, instead of max-plus automata with bounded weights and an unbounded number of states, we obtain max-plus automata with a bounded number of states, but with unbounded weights.

\begin{proof}
Let $\Omega = \{c^+_1,c^+_2,c^-_1,c^-_2\}$ denote the set of the possible actions on the counters. The idea is to encode the executions in $\mathcal{M}$ by words on the alphabet
$\alphabet = \{a,b\} \cup \Omega$. 
A block $a^m$\resp{$b^m$} will encode the fact that the 
first\resp{second} counter has currently value $m$. 
For example, a word $a^nb^m c_1^{+} a^{n+1}b^m c_2^{-} a^{n+1}b^{m'}$ 
encodes an execution starting with value $n$ in the first counter and $m$ 
in the second counter.
The action $c_1^{+}$ is performed, leading to store $n+1$ in the first 
counter and $m$ in the second, encoded by the word $a^{n+1}b^m$. Then the action
$c_2^{-}$ is used. In this case, either $m=m'=0$ or $m=m'+1$.

The max-plus automaton $\A$ is constructed in such a way that if
a word $w$ encodes a valid execution from state $q_{init}$ to state $q_{halt}$, 
starting with value $n$ in the first counter and value $0$ in the second, 
then $\fonc{\A}(w) = -1$, otherwise 
$\fonc{\A}(w) \geq 0$. 

The automaton $\A$ is constructed as the finite union of automata:
some checks that $w$ has the good shape \ie it belongs to $a^*(\Omega a^*b^*)^*\Omega$ and represents a valid path with respect to the states of $M$.
In this case the value associated with $w$ is $-\infty$, otherwise it is $0$.
Another automaton checks that the counters are correctly incremented/decremented.
Finally, the main difference with the proof by Colcombet, is to check that the counters are initialised to $(n,0)$ (and not $(0,0)$), \ie check that 
the word belongs to $a^n \Omega \alphabet^*$.

More precisely, we consider a two-counter machine 
$\mathcal{M} = (Q,T^+_1,T^+_2,T^-_1,T^-_2,q_{init},q_{halt})$
and a nonnegative integer $n$. A word is not valid 
(does not represent an accepting run in $\mathcal{M}$) if it satisfies at least 
one of the following conditions:
\begin{enumerate}
\item it does not belong to $a^*\Omega(a^*b^*\Omega)^*\Omega$,
\label{forme}
\item it does not correspond to a path in machine $\M$ (with respect to
the states of $\M$), \label{match}
\item the counters are badly incremented/decremented. 
For example, a word contains a factor of the form 
$\gamma a^nb^m c_1^{+} a^{n'}b^{m'} \gamma'$ 
with $n'\neq n+1$ or $m\neq m'$, \label{incdec}
\item it does not belong to $a^n \Omega \alphabet^*$ 
(the first counter is not initialised to $n$). \label{an}
\end{enumerate}
Items \ref{forme}. and \ref{match}. refer to rational languages. It is then sufficient 
to construct an automaton $\A_1$ recognising these languages. 
By weighting all the transitions
by $0$, a word $w$ that satisfies at least one of the conditions \ref{forme}. 
or \ref{match}. verifies $\fonc{\A_1}(w)=0$. Otherwise, $\fonc{\A_1}(w)=-\infty$.

The automaton corresponding to the condition \ref{forme}. 
is given in Figure~\ref{figure1}.

\begin{figure}[!htbp]
\begin{center}
\begin{tikzpicture}[scale=0.5]
\tikzset{
	every state/.style={scale=0.7,draw=blue!50,very thick,fill=blue!20},
	initial text=,
	accepting/.style={accepting by arrow},
	fleche/.style={->, >=latex}
}

\node[state, initial] (q_1) at (2,0) {};
\node[state, accepting] (q_2) at (6,0) {$q$};
\node[state, initial] (q_3) at (-6,0) {$p$};
\node[state, accepting] (q_4) at (-2,0) {};
\node[state] (q_5) at (0,-3) {};

\path[fleche]     (q_1) edge  node [above] {\small{$b:0$}}
 (q_2);
\path[fleche]     (q_2) edge [loop above] node         {\small{$\alphabet:0$}} ();
\path[fleche]     (q_3) edge [loop above] node         {\small{$\alphabet:0$}} ();
\path[fleche]     (q_3) edge  node [above] {\small{$a,b:0$}} 
(q_4);
\path[fleche]     (q_3) edge  node [above, sloped] {\small{$b:0$}} (q_5);
\path[fleche]     (q_5) edge  node [above, sloped] {\small{$a:0$}} (q_2);
\end{tikzpicture}
\end{center}
\caption{\label{figure1}Item \ref{forme}.}
\end{figure}

The automaton constructed for Item \ref{match}. has to check if the execution is not valid with respect to the states of $\M$. 
Essentially, it follows the execution, goes in a sink state if a transition does not exist and accepts all the words that do not end in $q_{halt}$.
Let us remind that $\M$ is deterministic. Let us split $Q$ into the disjoint union of $Q^+ \cup Q_1^- \cup Q_2^-$ where
$Q^+$ represents the states where the actions performed are only increment, 
$Q^-_1$ the states where the first counter can be decremented and 
$Q^-_2$ the states where the second counter can be decremented.
The automaton constructed has states $Q \cup \{q^a \mid q \in Q^-_1\}  \cup \{q^b \mid q \in Q^-_2\} $, plus an additional state to make the automaton complete. 
The initial state is $q_{init}$. 
If we are in a state $q$ of $Q^-_1$\resp{$Q^-_2$}  and we read an $a$\resp{$b$}, we move to $q^a$\resp{$q^b$}, \ie there are transitions 
$(q,a,q^a)$ for all $q\in Q^-_1$\resp{$(q,b,q^b)$ for all $q\in Q^-_2$}. 
Moreover there are loops $(q,b,q)$, $(q^a,a,q^a)$ and $(q^a,b,q^a)$ for all $q\in Q^-_1$\resp{$(q,a,q)$, $(q^b,a,q^b)$ and $(q^b,b,q^b)$ for all $q\in Q^-_2$}
as well as loops $(q,a,q)$ and $(q,b,q)$ for all $q\in Q^+$.

Finally, there are transitions $(p,c_1^+,q)$\resp{$(p,c_2^+,q)$}
for all $(p, q) \in T^+_1$\resp{$T^+_2$},
transitions $(p,c_1^-,q)$\resp{$(p,c_2^-,q)$} if for some $r$, 
$(p, q, r) \in T^-_1$\resp{$T^-_2$}  and 
transitions 
$(p^a,c_1^-,r)$\resp{$(p^b,c_2^-,r)$} if for some $q$, 
$(p, q, r) \in T^-_1$\resp{$T^-_2$}.

All the transitions have weight $0$. All the states are final except $q_{halt}$.

This automaton has at most
$2|Q| + 1$ states (to minimise the global number of states, the additional state
can be merged with the state $q$ of Figure~\ref{figure1}).


As for item \ref{incdec}., let us treat the case of checking if the 
first counter is well incremented after performing a transition with action
$c_1^{+}$.
We construct an automaton $\A_2$ 
as in Figure~\ref{figure:krob2}. 
Consider the word
$w = a^nb^m c_1^{+} a^{n'}b^{m'}$.
For the first part of $\A_2$ (above), the value computed on $w$ is $n-n'$. 
As for the second part (below), the value computed on $w$ is $-n-2+n'$. 
Thus, $\fonc{\A_2}(w) = \max(n-n',n'-n-2)$. 
If $n' \neq n+1$, 
$\fonc{\A_2}(w) \geq 0$, otherwise $\fonc{\A_2}(w) = -1$.
By nondeterminism and the use of the semantics $\max$, we can prove that
for all the words $w$,
$\fonc{\A_2}(w) \geq 0$ if and only if $w$ contains a factor witnessing that
the first counter is badly incremented after performing a transition with an 
action $c_1^{+}$. Otherwise, $\fonc{\A_2}(w) = -1$.
To check that the number of $a$'s does not change while incrementing the 
second counter, it is sufficient to add two transitions labelled by $c^+_2$
with weight $-1$ in parallel of the ones labelled by $c^+_1$ in the automaton of Figure~\ref{figure:krob2}.

As for the decrement, it is the same idea, except that a special 
case needs to be considered when the value of the counter is already $0$,
as shown in Figure~\ref{figure:krob2bis}.

Similar automata are constructed to check the good behaviour of the number of $b$'s.

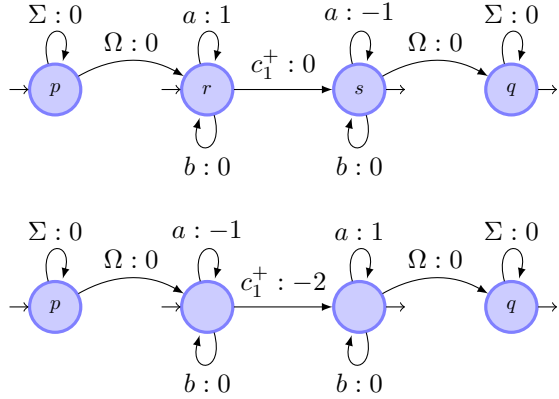
\begin{figure}[!htbp]
\begin{center}
\begin{tikzpicture}[scale=0.5]
\tikzset{
	every state/.style={draw=blue!50,very thick,fill=blue!20,scale=0.7},
	initial text=,
	accepting/.style={accepting by arrow},
	fleche/.style={->, >=latex}
}

\node[state, initial] (q_1) at (-6,0) {$p$};
\node[state, initial] (q_2) at (-2,0) {$r$};
\node[state,accepting] (q_3) at (2,0) {$s$};
\node[state, accepting] (q_4) at (6,0) {$q$};

\path[fleche]     (q_1) edge [loop above] node         {\small{$\alphabet:0$}} ();
\path[fleche]     (q_1) edge [bend left] node [above] {\small{$\Omega:0$}}(q_2);
\path[fleche]     (q_2) edge [loop above] node         {\small{$a:1$}} ();
\path[fleche]     (q_2) edge [loop below] node         {\small{$b:0$}} ();
\path[fleche]     (q_2) edge  node [above] {\small{$c_1^{+}:0$}}(q_3);
\path[fleche]     (q_3) edge [loop above] node         {\small{$a:-1$}} ();
\path[fleche]     (q_3) edge [loop below] node         {\small{$b:0$}} ();
\path[fleche]     (q_3) edge [bend left] node [above] {\small{$\Omega:0$}}(q_4);
\path[fleche]     (q_4) edge [loop above] node         {\small{$\alphabet:0$}} ();
\end{tikzpicture}
\end{center}
\begin{center}
\begin{tikzpicture}[scale=0.5]
\tikzset{
	every state/.style={draw=blue!50,very thick,fill=blue!20,scale=0.7},
	initial text=,
	accepting/.style={accepting by arrow},
	fleche/.style={->, >=latex}
}

\node[state, initial] (q_1) at (-6,0) {$p$};
\node[state, initial] (q_2) at (-2,0) {};
\node[state, accepting] (q_3) at (2,0) {};
\node[state, accepting] (q_4) at (6,0) {$q$};

\path[fleche]     (q_1) edge [loop above] node         {\small{$\alphabet:0$}} ();
\path[fleche]     (q_1) edge [bend left] node [above] {\small{$\Omega:0$}}(q_2);
\path[fleche]     (q_2) edge [loop above] node         {\small{$a:-1$}} ();
\path[fleche]     (q_2) edge [loop below] node         {\small{$b:0$}} ();
\path[fleche]     (q_2) edge  node [above] {\small{$c_1^{+}:-2$}}(q_3);
\path[fleche]     (q_3) edge [loop above] node         {\small{$a:1$}} ();
\path[fleche]     (q_3) edge [loop below] node         {\small{$b:0$}} ();
\path[fleche]     (q_3) edge [bend left] node [above] {\small{$\Omega:0$}}(q_4);
\path[fleche]     (q_4) edge [loop above] node         {\small{$\alphabet:0$}} ();
\end{tikzpicture}
\end{center}
\caption{\label{figure:krob2} Automaton that checks if the first counter is 
well incremented.}
\end{figure}

\begin{figure}[!htbp]
\begin{center}
\begin{tikzpicture}[scale=0.5]
\tikzset{
	every state/.style={draw=blue!50,very thick,fill=blue!20,scale=0.7},
	initial text=,
	accepting/.style={accepting by arrow},
	fleche/.style={->, >=latex}
}

\node[state, initial] (q_1) at (-6,0) {$p$};
\node[state, initial] (q_2) at (-2,0) {};
\node[state,accepting] (q_3) at (2,0) {};
\node[state, accepting] (q_4) at (6,0) {$q$};

\path[fleche]     (q_1) edge [loop above] node         {\small{$\alphabet:0$}} ();
\path[fleche]     (q_1) edge [bend left] node [above] {\small{$\Omega:0$}}(q_2);
\path[fleche]     (q_2) edge [loop above] node         {\small{$b:0$}} ();
\path[fleche]     (q_2) edge  node [above] {\small{$c_1^{-}:-1$}}(q_3);
\path[fleche]     (q_3) edge [loop below] node         {\small{$a:1$}} ();
\path[fleche]     (q_3) edge [loop above] node         {\small{$b:0$}} ();
\path[fleche]     (q_3) edge [bend left] node [above] {\small{$\Omega:0$}}(q_4);
\path[fleche]     (q_4) edge [loop above] node         {\small{$\alphabet:0$}} ();
\end{tikzpicture}
\end{center}
\begin{center}
\begin{tikzpicture}[scale=0.5]
\tikzset{
	every state/.style={draw=blue!50,very thick,fill=blue!20,scale=0.7},
	initial text=,
	accepting/.style={accepting by arrow},
	fleche/.style={->, >=latex}
}

\node[state, initial] (q_1) at (-6,0) {$p$};
\node[state] (q_2) at (-2,0) {$r$};
\node[state,accepting] (q_3) at (2,0) {$s$};
\node[state, accepting] (q_4) at (6,0) {$q$};
\node[state, initial] (q) at (-4,-3) {};

\path[fleche]     (q_1) edge [loop above] node         {\small{$\alphabet:0$}} ();
\path[fleche]     (q_1) edge node [sloped, above] {\small{$\Omega:0$}}(q);
\path[fleche]     (q) edge  node [sloped, above] {\small{$a:1$}}(q_2);
\path[fleche]     (q_2) edge [loop above] node         {\small{$a:1$}} ();
\path[fleche]     (q_2) edge [loop below] node         {\small{$b:0$}} ();
\path[fleche]     (q_2) edge  node [above] {\small{$c_1^{-}:-2$}}(q_3);
\path[fleche]     (q_3) edge [loop above] node         {\small{$a:-1$}} ();
\path[fleche]     (q_3) edge [loop below] node         {\small{$b:0$}} ();
\path[fleche]     (q_3) edge [bend left] node [above] {\small{$\Omega:0$}}(q_4);
\path[fleche]     (q_4) edge [loop above] node         {\small{$\alphabet:0$}} ();
\end{tikzpicture}
\end{center}
\begin{center}
\begin{tikzpicture}[scale=0.5]
\tikzset{
	every state/.style={draw=blue!50,very thick,fill=blue!20,scale=0.7},
	initial text=,
	accepting/.style={accepting by arrow},
	fleche/.style={->, >=latex}
}

\node[state, initial] (q_1) at (-6,0) {$p$};
\node[state] (q_2) at (-2,0) {};
\node[state,accepting] (q_3) at (2,0) {};
\node[state, accepting] (q_4) at (6,0) {$q$};
\node[state, initial] (q) at (-4,-3) {};

\path[fleche]     (q_1) edge [loop above] node         {\small{$\alphabet:0$}} ();
\path[fleche]     (q_1) edge node [sloped, above] {\small{$\Omega:0$}}(q);
\path[fleche]     (q) edge node [sloped, above] {\small{$a:-1$}}(q_2);
\path[fleche]     (q_2) edge [loop above] node         {\small{$a:-1$}} ();
\path[fleche]     (q_2) edge [loop below] node         {\small{$b:0$}} ();
\path[fleche]     (q_2) edge  node [above] {\small{$c_1^{-}:0$}}(q_3);
\path[fleche]     (q_3) edge [loop above] node         {\small{$a:1$}} ();
\path[fleche]     (q_3) edge [loop below] node         {\small{$b:0$}} ();
\path[fleche]     (q_3) edge [bend left] node [above] {\small{$\Omega:0$}}(q_4);
\path[fleche]     (q_4) edge [loop above] node         {\small{$\alphabet:0$}} ();
\end{tikzpicture}
\end{center}
\caption{\label{figure:krob2bis} Automaton that checks if the first counter is 
well decremented.}
\end{figure}
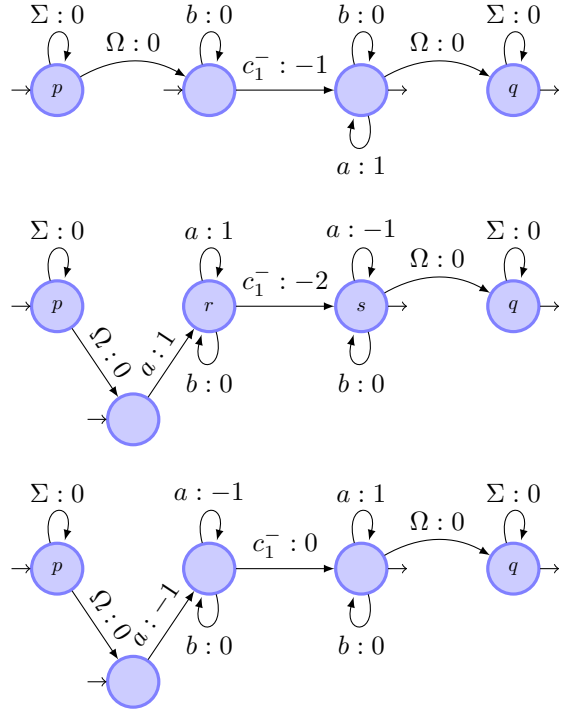


Finally, for item \ref{an}., we construct an automaton $\A_3$
as in Figure~\ref{figure:krob3}.
Consider a word $w$ in
$a^m\Omega\alphabet^*$. 
For the first part (above) of $\A_3$, the value computed on $w$ is $m-n-1$. 
As for the second part (below), the value computed on $w$ is $-m+n-1$. Hence,
$\fonc{\A_3}(w) = \max(m-n-1,-m+n-1)= |m-n|-1$. Thus, 
if $n=m$ then $\fonc{\A_3}(w) = -1$, otherwise $\fonc{\A_3}(w) \geq 0$.

\begin{figure}[!htbp]
\begin{minipage}{7cm}
\begin{center}
\begin{tikzpicture}[scale=0.5]
\tikzset{
	every state/.style={draw=blue!50,very thick,fill=blue!20,scale=0.7},
	initial text=,
	accepting/.style={accepting by arrow},
	fleche/.style={->, >=latex}
}

\node[state, initial] (q_1) at (-3,0) {};
\node[state, accepting] (q_2) at (3,0) {$q$};

\path[fleche]     (q_1) edge [loop above] node         {\small{$a:1$}} ();
\path[fleche]     (q_1) edge  node [above] {\small{$\Omega:-n-1$}}(q_2);
\path[fleche]     (q_2) edge [loop above] node         {\small{$\alphabet:0$}} ();
\end{tikzpicture}
\end{center}
\end{minipage}
\begin{minipage}{7cm}
\begin{center}
\begin{tikzpicture}[scale=0.5]
\tikzset{
	every state/.style={draw=blue!50,very thick,fill=blue!20,scale=0.7},
	initial text=,
	accepting/.style={accepting by arrow},
	fleche/.style={->, >=latex}
}

\node[state, initial] (p_1) at (-3,0) {};
\node[state, accepting] (p_2) at (3,0) {$q$};

\path[fleche]     (p_1) edge [loop above] node         {\small{$a:-1$}} ();
\path[fleche]     (p_1) edge  node [above] {\small{$\Omega:n-1$}}(q_2);
\path[fleche]     (p_2) edge [loop above] node         {\small{$\alphabet:0$}} ();
\end{tikzpicture}
\end{center}
\end{minipage}
\caption{\label{figure:krob3} Automaton that checks if the first counter is 
initialised to $n$.}
\end{figure}
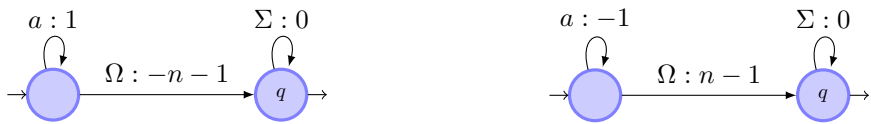


By merging states playing the same role, one can build a max-plus automaton 
$\A$ with $2|Q| + 27$ states computing the maximum of the functions described above. 
This max-plus automaton satisfies $\fonc{\A}(w) = -1$ if and only if $w$ 
represents a valid 
execution for $\mathcal{M}$ with counters initialised to $(n,0)$ and ending in
$q_{halt}$. Otherwise, $\fonc{\A}(w) \geq 0$.
\end{proof}

\begin{proof}[Proof of Theorem~\ref{theorem:main}]
The undecidability comes from a reduction from the halting problem of a U2CM, which is given in Lemma~\ref{lemma:transfoweights}.
There exists a U2CM with $268$ states \cite{Ivanov2014}, thus the comparison problem for max-plus automata with $553$ states on a $6$-letter alphabet
with weight in $\{-n-1,-2,-1,0,1,n-1\}$ is undecidable.
\end{proof}

\section{Conclusion and open questions}

In this paper, we have proved that the joint spectral radius and 
the ultimate rank of a finite set of matrices over the tropical semiring
are not computable
(from the proof it can be seen that they are actually computably-enumerable-complete).
To this end, we have proved the undecidability
of the comparison and equivalence of max-plus automata in restricted cases:
when all the states are both initial and final and when the number of states
is bounded.

As for the restriction on the number of states, we proved that comparison 
is undecidable when restricted to $553$ states. Now, the question is to understand what happens 
between $2$ and $552$ states. Even when restricted to $2$ states, it seems quite a difficult 
question to answer. Moreover, the various proofs highlight a link between several universal 
models: diophantine equations and two-counter machines. 
Having better size bounds on these models would give a better bound for
our undecidable problem, but conversely, getting the decidability of comparison 
for max-plus automata with at most a certain number of states could lead to 
improve the known lower bounds on the size of these universal objects.

As for the joint spectral radius, one
could ask if it is always rational or if, on the opposite, the set of 
joint spectral radii of finite families of matrices admits some computability-theoretic characterization.
With respect to complexity, the main open question is whether it
is PSPACE to approximate the joint spectral radius. 

Finally, 
we have the undecidability of comparison for max-plus automata whose states are all
both initial and final over an alphabet of fixed size. 
Standard techniques can encode any 
alphabet into a $2$-letter one, but it is unclear how to adapt them so as to maintain all states 
initial and final. Then the decidability of computing the joint spectral radius  
of a set of $2$ matrices is still open.

\bibliographystyle{plain}
\bibliography{urkind}

\end{document}